\newif\ifreport
\newenvironment{frontmatter}{}{}
\newenvironment{keyword}{Keywords:}{}
\tikzstyle{robot}=[circle, draw, fill=black!80,inner sep=0pt, minimum width=4pt]
\tikzstyle{dest}=[circle, draw, fill=white,inner sep=0pt, minimum width=4pt]
\newcommand{\R}{\mathbb{R}}
\newcommand{\Z}{\mathbb{Z}}
\newcommand{\N}{\mathbb{N}}
\newcommand{\ie}{\emph{i.e.}\xspace}
\newtheorem{theorem}{Theorem}
\newtheorem{lemma}{Lemma}
\newtheorem{corollary}{Corollary}
\newtheorem{remark}{Remark}
\newenvironment{proof}{{\bf Proof. } }{{\hfill $\Box$}\vspace{.5pc}}
\begin{document}
\begin{frontmatter}

\ifreport

\title{Stand Up Indulgent Gathering\footnote{This work was partially funded by the ANR project SAPPORO, ref. 2019-CE25-0005-1. Preliminary results appearing in this paper were presented at SSS 2020~\cite{BLT20c} and AlgoSensors 2021~\cite{BLT21c}.}}

\author[1]{Quentin Bramas}
\author[1]{Anissa Lamani}
\author[2]{S\'{e}bastien Tixeuil}

\affil[1]{University of Strasbourg, CNRS, ICUBE, France}
\affil[2]{Sorbonne University, CNRS, LIP6, France}

\maketitle

\else

\title{Stand Up Indulgent Gathering\tnoteref{sapporo}}
\tnotetext[sapporo]{This work was partially funded by the ANR project SAPPORO, ref. 2019-CE25-0005-1. Preliminary results appearing in this paper were presented at SSS 2020~\cite{BLT20c} and AlgoSensors 2021~\cite{BLT21c}.}

\author[icube]{Quentin Bramas}
\ead{bramas@unistra.fr}
\author[icube]{Anissa Lamani}
\ead{alamani@unistra.fr}
\author[lip6]{S\'{e}bastien Tixeuil}
\ead{sebastien.tixeuil@lip6.fr}

\address[icube]{University of Strasbourg, CNRS, ICUBE, France}
\address[lip6]{Sorbonne University, CNRS, LIP6, France}

\fi

\begin{abstract}
We consider a swarm of mobile robots evolving in a bidimensional Euclidean space. We study a variant of the crash-tolerant gathering problem: if no robot crashes, robots have to meet at the same arbitrary location, not known beforehand, in finite time; if one or several robots crash at the same location, the remaining correct robots gather at the crash location to rescue them.
Motivated by impossibility results in the semi-synchronous setting, we present the first solution to the problem for the fully synchronous setting that operates in the vanilla Look-Compute-Move model with no additional hypotheses: robots are oblivious, disoriented, have no multiplicity detection capacity, and may start from arbitrary positions (including those with multiplicity points). We furthermore show that robots gather in a time that is proportional to the initial maximum distance between robots.
\end{abstract}

\begin{keyword}
Mobile robots, gathering, fault-tolerance.
\end{keyword}

\end{frontmatter}

\ifreport
\else
    \linenumbers
\fi

\section{Introduction}

The study of swarms of mobile robots from an algorithmic perspective was initiated by Suzuki and Yamashita~\cite{SuzukiY99}, and the simplicity of their model fostered many research results from the Distributed Computing community~\cite{2019Flocchini}. Such research focused in particular on the computational power of a set of autonomous robots evolving in a bidimensional Euclidean space. More precisely, the purpose is to determine which tasks can be achieved by the robots, and at what cost. 

Typically, robots are modeled as points in the two-dimensional plane, and are assumed to have their own coordinate system and unit distance. In addition, they are \emph{(i)} anonymous (they can not be distinguished), \emph{(ii)} uniform (they execute the same algorithm) and, \emph{(iii)} oblivious (they cannot remember their past actions).
Robots cannot interact directly but are endowed with visibility sensors allowing them to sense their environment and see the positions of the other robots. Robots operate in cycles that comprise three phases: Look, Compute and Move (LCM). During the first phase (Look), robots take a snapshot to see the positions of the other robots. During the second phase (Compute), they either compute a destination or decide to stay idle.  In the last phase (Move), they move towards the computed destination (if any).
Three execution models have been considered in the literature to represent the amount of synchrony between robots. The fully synchronous model (FSYNC) assumes all robots are activated simultaneously and execute their LCM cycle synchronously. The semi-synchronous model (SSYNC) allows that only a subset of robots is activated simultaneously. The asynchronous model (ASYNC) makes no assumption besides bounding the duration of each phase of the LCM cycle. 

Although a number of problems have been considered in this setting (\emph{e.g.}, pattern formation, scattering, gathering, exploration, patrolling, etc.), the gathering problem remains one of the most fundamental tasks one can ask mobile robots to perform, \emph{e.g.} to collect acquired data or to update the robots' algorithm. To solve gathering, robots are requested to reach the same geographic location, not known beforehand, in finite time. Despite its simplicity, the gathering problem yielded a number of impossibility results due to initial symmetry and/or adversarial scheduling. The special case of two robots gathering is called rendezvous in the literature.

When the number of robots in a swarm increases, the possibility of having one robot crash unpredictably becomes important to tackle. Previous work on fault-tolerant gathering considered two variants of the problem. In the weak variant, only correct (that is, non-failed) robots are requested to gather, while in the strong variant, all robots (failed and correct) must gather at the same point. Obviously, the strong variant is only feasible when all faulty robots crash at the same location. In this paper, we consider the strong version of the crash-tolerant gathering problem, which we call \emph{stand up indulgent gathering} (SUIG). In more detail, an algorithm solves the SUIG problem if the two following conditions are satisfied: \emph{(i)} if no robot crashes, all robots gather at the same location, not known beforehand, in finite time, and \emph{(ii)} if one or more robots crash at some location $l$, all robots gather at $l$ in finite time. The SUIG problem can be seen as a more stringent version of gathering, where correct robots are requested to rescue failed ones should any unforeseen event occur. Of course, robots are unaware of the crashed status of any robot as they have no means to communicate directly.

\paragraph{Related Works}
A fundamental result~\cite{SuzukiY99} shows that when robots operate in a fully synchronous manner, the rendezvous can be solved deterministically, while if some robots are allowed to wait for a while (this is the case \emph{e.g.} in the SSYNC model), the problem becomes impossible without additional assumptions. The impossibility result for SSYNC rendezvous extends to SSYNC gathering whenever bivalent initial configurations\footnote{A configuration is \emph{bivalent} if all robots are evenly spread on exactly two distinct locations.} are allowed~\cite{courtieu15ipl}, or when robots have no multiplicity detection capabilities~\cite{prencipe07tcs}. On the other hand, FSYNC gathering is feasible~\cite{cohen05siam}, even when robots have no access to multiplicity detection~\cite{balabonski19tcs}.  

Early solutions to weak crash-tolerant gathering in SSYNC for groups of at least three robots make use of extra hypotheses: \emph{(i)} starting from a distinct configuration (that is, a configuration where at most one robot occupies a particular position), at most one robot may crash~\cite{AgmonP06}, \emph{(ii)} robots are activated one at a time~\cite{DefagoGMP06}, \emph{(iii)} robots may exhibit probabilistic behavior~\cite{defago20dc}, \emph{(iv)} robots share a common chirality (that is, the same notion of handedness)~\cite{bouzid13icdcs}, \emph{(v)} robots agree on a common direction~\cite{BCM15}. It turns out that these hypotheses are \emph{not} necessary to solve deterministic weak crash-tolerant gathering in SSYNC, when up to $n-1$ robots may crash~\cite{bramas15wait}. 

The case of SUIG mostly yielded impossibility results: with at most a single crash, strong crash-tolerant gathering $n\geq3$ robots deterministically in SSYNC is impossible when robots do not have persistent coordinate systems, even if robots are executed one at a time~\cite{DefagoGMP06,defago20dc}, and probabilistic strong crash-tolerant gathering $n\geq3$ robots in SSYNC is impossible with a fair scheduler~\cite{DefagoGMP06,defago20dc}. However, probabilistic strong crash-tolerant gathering with $n\geq3$ robots becomes possible in SSYNC if the relative speed of the robots is upper bounded by a constant~\cite{DefagoGMP06,defago20dc}.

This raises the following open question: for $n\geq2$ robots, is it possible to deterministically gather all robots (if none of them crashes) at the same position, not known beforehand, or to deterministically gather at the position of the crashed robots, if some of them crash unexpectedly at the same location? Are additional hypotheses (such as multiplicity detection capability, or starting from a non-bivalent or even distinct configuration) necessary to solve the problem? \\

\paragraph{Contribution}

We answer positively to the feasibility of SUIG in mobile robotic swarms. Our approach is constructive, as we give an explicit algorithm to solve the SUIG problem in the FSYNC execution model, using only oblivious robots with arbitrary local coordinate systems (there is no agreement on any direction, chirality, or unit distance), without relying on multiplicity detection (robots are unaware of the number of robots, or approximation of it, on any occupied location). Furthermore, a nice property of our algorithm is that no assumptions are made about the initial positions of the robots (in particular, we allow to start from a configuration with multiplicity points, and even from a bivalent configuration), so the algorithm is self-stabilizing~\cite{dolev00book}.
The time complexity of our SUIG algorithm is $O\left(
\frac{D}{\delta} + 
\log\left(\frac{\delta}{u_{\mathit{min}}}\right) + 
\log^2\left(\frac{u_{\mathit{max}}}{u_{\mathit{min}}}\right)\right)$ rounds, where $u_{\mathit{max}}$, resp. $u_{\mathit{min}}$, denotes the maximum, resp. minimum, unit distance among robots, $D$ is the maximum distance between two robots in the initial configuration, and $\delta$ is the minimum distance a robot is assured to travel each round. 

To achieve this result, we introduce a new \emph{level-slicing} technique, that permits to partition robots that execute specific steps of the algorithm within a class of configurations, according to a level. This level permits to coordinate robots even when they do \emph{not} agree on a common unit distance. 
We believe that this technique could be useful for other problems related to mobile robotic swarms.

To illustrate the notion of levels, we first present a solution for the rendezvous problem in the presence of a single crash called SUIR problem (Stand Up Indulgent rendezvous). The rendezvous problem is a special case of the gathering when $k=2$. To justify the fully-synchronous setting assumed, we present an impossibility result that shows that even if robots have a common coordinate system, have full-lights\footnote{The \emph{full-lights} model corresponds to robots that are endowed with lights with various colors that can be changed by the hosting robot , and seen by other robots. Hence, robots can communicate explicitly with other robots.} with infinitely many colors, and their movements are rigid, no solution exists for the SUIR problem in SSYNC. 

\paragraph{Roadmap} We start by defining our model in Section~\ref{sec:model}. Next, we present our impossibility results in Section~\ref{sec:impossibility}. We warm up in Section~\ref{sec:rdv} by considering the Stand Up Indulgent Rendezvous (SUIR) problem, and then address the Stand Up Gathering (SUIG) problem in Section~\ref{sec:gethering}. Finally, we conclude and present some open questions in Section~\ref{sec:conclusion}. 

\section{Model}\label{sec:model}

Let $\mathcal{R}=\{r_1, r_2, \dots r_k\}$ be the set of $k \geq 2$ robots modeled as points in the Euclidean two-dimensional space. Robots are assumed to be anonymous (they are indistinguishable), uniform (they all execute the same algorithm), and oblivious (they cannot remember past actions). 

Let $Z$ be a global coordinate system. Let $p_i(t) \in \mathbb{R}^2$ be the coordinate of either a robot $r_i$ or a set of robots collocated with $r_i$ at time $t$ with respect to $Z$. Given a time $t$, a \emph{configuration} at time $t$ is defined as a set $\mathcal{C}_t=\{p_1(t), p_2(t), \dots p_m(t)\}$ ($m \leq k$) where  $m$ is the number of occupied points in the plane and $p_i(t) \ne p_j(t)$ for any $i,j \in \{1, \dots, m\}$, $i\neq j$. Robots do not know $Z$. Instead, each robot $r_i$ has its own coordinate system $Z_{r_i}$ centered at the current position of $r_i$. We assume \emph{disoriented} robots, \ie, they neither agree on an axis nor have a common unit distance. Observe that the Smallest Enclosing Circle (SEC) is independent of the coordinate system.

Robots operate in cycles of three phases: Look, Compute and Move. During the Look phase, a robot $r_i$ observes the positions of all the other robots and builds the set $V_{r_i}(t)  = \{p_1(t),\dots, p_m(t)\}$ ($m \leq k$) where  $m$ is the number of occupied points in the plane and $p_i(t)$ refers to the coordinate of a robot or a collocated set of robots with respect to $r_i$'s local coordinate system $Z_{r_i}$ (translated by $-r_i$ so that $r_i$ is always at the center). We call $V_{r_i}(t)$ the \emph{local view} of $r_i$ at time $t$. Note that robots have no way to distinguish positions occupied by a single robot from those occupied by more than one robot, \ie, they have no multiplicity detection. During the compute phase, $r_i$ computes a target destination with respect to its local coordinate system $Z_{r_i}$. Finally, during the move phase, $r_i$ moves towards the computed destination. We assume \emph{non-rigid} movements, \ie, when a robot moves towards a target destination computed during the Move phase, an adversary can prevent it from reaching it by stopping the robot anywhere along the straight path towards the destination. However, we assume that the robot travels at least a fixed positive distance $\delta$. The value of $\delta$ is common to all robots but is unknown. It can be arbitrarily small and does not change during the execution of the algorithm.

In the fully-synchronous model (FSYNC), all correct robots are activated at each time instant called a round and all robots execute their cycles synchronously. In the \emph{Semi-synchronous} model (SSYNC), only a non-empty subset of the correct robots may be activated at each round and similarly to the FSYNC model, the activated robots execute their cycle synchronously. In the latter case, we consider only \emph{fair} schedules, \ie, schedules where each correct robot is activated infinitely often.

\newcommand{\IT}{\mathcal{T}}

An algorithm $A$ is a function mapping local views to destinations. When $r_i$ is activated at time $t$, algorithm $A$ outputs $r_i$'s destination $d$ in its local coordinate system $Z_{r_i}$.  An execution $\mathcal{E} = (C_0, C_1, \dots)$ of $A$ is a sequence of configurations, where $C_0$ is an initial configuration, and every configuration $C_{t+1}$ is obtained from $C_{t}$ by applying $A$. 

A robot is said to \emph{crash} at time $t$ if it is not activated at any time $t' \geq t$, \ie, a crashed robot stops executing its algorithm and remains at the same position indefinitely.

\paragraph{The Stand Up Indulgent Gathering Problem}

An algorithm solves the Stand Up Indulgent Gathering (SUIG) problem if, for any initial configuration $C_0$ and for any execution $\mathcal{E}=(C_0, C_1, \ldots)$ with up to one crashed location (location where robots crash), there exists a round $t$ and a point $p$ such that $C_{t'} = \{p\}$ for all $t'\geq t$. In other words, if one robot crashes, all non-crashed robots go and join the crashed one; if no robot crashes, all robots gather in a finite number of rounds. 

Since we consider oblivious robots and an arbitrary initial configuration,  we can consider without loss of generality that the crash, if any, occurs at the start of the execution.

As mentioned previously, to illustrate the core idea of our proposed solution, let us first consider a special case of the SUIG problem called the Stand-Up Indulgent Rendezvous (SUIR), which addresses the gathering problem in the presence of at most one single crashed robot when $k = 2$.

\section{Impossibility Results}\label{sec:impossibility}

In this section, we first show that the SUIR problem is not solvable in SSYNC even if robots share a full coordinate system, and have access to infinite persistent memory that is readable by the other robot. In the literature~\cite{0001FPSY16}, the persistent memory aspect is called the Full-light model with an infinite number of colors. We then extend the result to show that without additional assumptions, the SUIG problem is also impossible to solve in SSYNC.

Let us first present a key point to prove our impossibility results.

\begin{lemma}\label{lem:execution contain move to other}
Consider the SSYNC model, with rigid movements, robots endowed with full-lights with infinitely many colors, and a common coordinate system. Assuming algorithm $A$ solves the SUIR problem, then, in every execution suffix starting from a distinct configuration where only robot $r$ is activated (\emph{e.g.} because the other robot has crashed), there must exist a configuration where algorithm $A$ commands that $r$ moves to the other robot's position.
\end{lemma}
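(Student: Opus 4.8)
The plan is to argue by contradiction. Suppose $A$ solves SUIR, but there exists an execution suffix starting from a distinct configuration $C$ where only robot $r$ is activated (the other robot, call it $r'$, has crashed at its position, say $q$), and in which $A$ *never* commands $r$ to move exactly to $q$. I want to derive a contradiction with the requirement that $r$ must eventually reach $q$ and stay there forever (the strong gathering condition: the final configuration must be the singleton $\{q\}$, since $r'$ is stuck at $q$).

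First I would set up the state space precisely. Since $r'$ never moves and $r'$ carries a fixed light color, the only thing that changes from round to round is the pair (position of $r$, color of $r$); because movements are rigid, when $A$ outputs a destination $d \neq r$ in $r$'s frame, $r$ lands exactly at $d$. So the execution is deterministic once we fix $C$: it is a sequence $C = C_0, C_1, C_2, \dots$ where in each step $r$ either stays put (with possibly a color change) or jumps to some point strictly different from $q$. The key structural observation is that along this entire suffix, $r$ is never collocated with $r'$: indeed, if at some round $r$ were at $q$, then either the configuration is already the singleton $\{q\}$ — but then in the *previous* round $r$ was at some point $\neq q$ and $A$ commanded it to move to $q$, contradicting our assumption — or $r$ passed through $q$ while moving to a destination $\neq q$, which cannot make the configuration a singleton at that round either. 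Hence every $C_t$ in the suffix is a distinct (two-location) configuration with $r \neq r'$.

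Next I would invoke the gathering guarantee to reach the contradiction: since $A$ solves SUIR and $r'$ has crashed at $q$, the execution must reach a round $t^\star$ with $C_{t^\star} = \{q\}$ and remain there. But we just showed no $C_t$ in the suffix equals $\{q\}$ — contradiction. The one subtlety to handle carefully is the very first step: we must make sure the hypothesized "bad" suffix genuinely starts from a *distinct* configuration and is a legitimate execution of $A$ under a fair SSYNC scheduler that happens to activate only $r$ (legitimate precisely because $r'$ is crashed, and a crashed robot need never be activated). This is where the full-lights/rigid/common-coordinate-system hypotheses do no harm: they only make the adversary weaker, so the impossibility is stronger; we merely need that such a suffix, if it existed, would still have to satisfy the SUIR specification.

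The main obstacle — and the only place requiring real care — is ruling out the degenerate scenario where $r$ reaches $q$ "for free" without $A$ explicitly naming $q$ as a destination. Under rigid movements this is immediate (a robot lands exactly where $A$ says), but I would state it as an explicit sub-claim: under rigid motion, the set of positions $r$ can occupy along the suffix is exactly $\{C_0(r)\} \cup \{\text{destinations output by } A\}$, none of which is $q$ by hypothesis; therefore $r$ never occupies $q$, and the configuration never collapses to $\{q\}$. Everything else is bookkeeping. This lemma then serves as the seed for the full SSYNC impossibility of SUIR (and later SUIG): a malicious SSYNC scheduler can activate only $r$ to force $r$ toward $q$, then at the critical moment also activate $r'$ — which is not actually crashed — so that $r'$ "steps aside", preventing gathering forever.
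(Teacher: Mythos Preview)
Your proposal is correct and follows essentially the same contradiction argument as the paper: if $r$ is never commanded to move to the other robot's location, then under rigid movements $r$ never occupies that location, so gathering (which must occur at the crashed robot's position) is never achieved. The paper's own proof is the same idea compressed into two sentences; your version simply makes explicit the role of rigidity in guaranteeing that $r$ cannot land at $q$ ``for free,'' which the paper leaves implicit.
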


\begin{proof}
Any move of robot $r$ that does not go to the other robot location does not yield gathering. If this repeats infinitely, no rendezvous is achieved.
\end{proof}

\begin{theorem}\label{thm:SSYNC Impossibility with lights}
The SUIR problem is not solvable in SSYNC, even with rigid movements, robots endowed with full-lights with infinitely many colors, and sharing a common coordinate system.
\end{theorem}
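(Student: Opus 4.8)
The plan is to argue by contradiction: assuming some algorithm $A$ solves the SUIR problem in SSYNC (even with rigid moves, full‑lights and a common coordinate system), I will exhibit one crash‑free execution that is a legitimate \emph{fair} SSYNC schedule and yet never gathers. The whole leverage comes from Lemma~\ref{lem:execution contain move to other} together with the elementary observation that a single robot cannot tell whether its partner has crashed or is merely not being activated: the adversary keeps one robot frozen just long enough to push the other one into a configuration where $A$ is about to move it onto the frozen robot, and then switches roles \emph{before} that fatal move is ever executed.

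Concretely, starting from any distinct $2$‑robot configuration, the adversary follows this rule at every round, given the current (always distinct) configuration $C$ with robots $a$ and $b$. If $A$ commands each robot to move exactly onto the other one (the \emph{mutual‑attraction} case), activate both: with rigid moves they simply exchange positions, so the configuration stays distinct. Otherwise at least one robot, say $a$, is \emph{not} commanded onto $b$; activate $a$ alone, so it moves to a point different from $b$'s position (possibly staying put) and again the configuration stays distinct. By construction, every configuration produced by this adversary is distinct, hence the two robots are never gathered.

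It remains to check that this schedule is \emph{fair}, i.e. that each robot is activated infinitely often, and this is the only delicate point I anticipate. Suppose some robot $r$ were activated only finitely often; then from some round on the adversary activates only $\bar r$, starting from the distinct configuration reached at that round. By Lemma~\ref{lem:execution contain move to other}, which applies precisely because this suffix has only $\bar r$ activated from a distinct configuration, $A$ must eventually command $\bar r$ to move onto $r$'s position at some configuration $C'$ of that suffix. But at $C'$ the adversary's rule necessarily activates $r$: either we are in the mutual‑attraction case and the rule activates both robots; or we are not, and then $\bar r$ is already commanded onto $r$, so $r$ is the robot that is \emph{not} commanded onto the other, which is exactly the one the rule activates. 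Either way $r$ is activated at $C'$, a contradiction. Hence the schedule is fair, so we have a valid SSYNC execution with no crashed robot that never gathers, contradicting the assumption that $A$ solves SUIR. Note that this argument uses neither the common coordinate system, nor the lights, nor (essentially) rigidity — rigidity only makes the exchange step cosmetically clean — which is precisely why the impossibility persists under all these extra assumptions.
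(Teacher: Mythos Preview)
Your proof is correct and follows essentially the same strategy as the paper: build an adversarial SSYNC schedule that keeps the two occupied points distinct at every round, then invoke Lemma~\ref{lem:execution contain move to other} to derive fairness by contradiction. Your symmetric two-case rule (mutual attraction versus ``activate a robot not commanded onto the other'') is a slightly cleaner packaging of the paper's asymmetric four-case rule, but the structure and the use of the lemma are identical.
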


\begin{proof}
Assume for the purpose of contradiction that such an algorithm exists. Let $r$ be one of the robots, and $r'$ be the other robot. We construct a fair infinite execution where rendezvous is never achieved. At some round $t$, we either activate only $r$, only $r'$, or both, depending on what the (deterministic) output of the algorithm in the current configuration is:
\begin{itemize}
    \item If $r$ is dictated to stay idle: activate only $r$. In this case, only $r$ is activated in future rounds, the configuration never changes, and rendezvous is never achieved.
    \item If $r$ is dictated to move to $p\neq r'$: activate only $r$
    \item If $r$ is dictated to move to $r'$, and $r'$ is dictated to move: activate both robots.
    \item Otherwise ($r'$ is dictated to stay idle): activate only $r'$
\end{itemize}

We now show that the execution is fair. 
Suppose for the purpose of contradiction that the execution is unfair, so there exists a round $t$ after which only $r$ is executed, or only $r'$ is executed. In the first case, it implies there exists an execution suffix where $r$ is never dictated to move to the other robot, which contradicts Lemma~\ref{lem:execution contain move to other}. Now, if only $r'$ is activated after some round $t$, then there exists a suffix where $r'$ is always dictated to stay idle, which also contradicts Lemma~\ref{lem:execution contain move to other}.

The schedule we choose guarantees the following.
When only $r$ is activated, rendezvous is not achieved as $r$ is not moving to $r'$.
When only $r'$ is activated, rendezvous is not achieved as $r'$ is idle. If both robots are activated rendezvous is not achieved as $r$ is moving to $r'$ while $r'$ is moving. 

Overall, there exists an infinite fair execution where robots never meet, a contradiction with the initial assumption that the algorithm solves SUIR.
\end{proof}

\begin{corollary}\label{cor:impossibility of SUIG in SSYNC}
Without additional assumptions on the initial configuration, the SUIG problem is not solvable in SSYNC, even with rigid movements, robots endowed with full-lights with infinitely many colors, and sharing a common coordinate system.
\end{corollary}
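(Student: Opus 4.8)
The plan is to reduce the SUIG impossibility for $k$ robots to the already-established SUIR impossibility of Theorem~\ref{thm:SSYNC Impossibility with lights}. Since we are allowed to choose the initial configuration (there are no assumptions on it), I would pick a configuration that is, in effect, a two-location configuration: place roughly half the robots at one point $a$ and the rest at another point $b$, with exactly one of the two locations hosting the crashed robot(s). Because robots have no multiplicity detection, a group of collocated robots at $a$ and a group at $b$ behave exactly like two single robots $r$ and $r'$: each robot's local view is just $\{$my position, the other occupied position$\}$, and all robots at the same location share the same view, hence (by determinism) compute the same destination and move together. So a $k$-robot execution on this configuration projects onto a $2$-robot execution of the rendezvous problem.

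Concretely, I would take the adversary strategy from the proof of Theorem~\ref{thm:SSYNC Impossibility with lights} and lift it: at each round, treat ``the robot(s) at $a$'' as $r$ and ``the robot(s) at $b$'' as $r'$, and apply exactly the same case analysis (activate only the $a$-group, only the $b$-group, or both, according to whether the $a$-group is told to stay idle, to move somewhere other than $b$, or to move to $b$ while the $b$-group also moves). The crash is installed, as the paper already observes it may be, at the start of the execution: say all robots at location $b$ are crashed, so $b$ plays the role of the crashed robot $r'$ and the location $b$ never changes. Actually, to match Lemma~\ref{lem:execution contain move to other} cleanly it is enough to keep the same structure: the collocated group at $b$ either is crashed or is never activated under the unfair tail, and the collocated group at $a$ is never dictated to move to $b$ under the other unfair tail — both contradicting the SUIR-style argument. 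The fairness argument and the ``rendezvous never achieved'' argument then go through verbatim, with ``$r$'' replaced by ``every robot at $a$'' and ``$r'$'' by ``every robot at $b$'', because all of them stay collocated throughout by the collocation-invariance observation above.

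The one genuine subtlety — and the step I'd be most careful about — is justifying that collocated robots remain collocated and that the ``two-location'' structure is preserved by the adversary's schedule: this needs that when a location's group is activated, all robots there have identical local views (true, since the view is centered on their common position and disorientation/lights don't break this once they share a coordinate system, and in fact here we are even granting a common coordinate system) and therefore move identically, and that the adversary never puts the configuration into a three-location state — which is exactly guaranteed because in every branch of the case analysis, either nobody moves, or one group moves onto the other's location (achieving a collision but still two-or-fewer locations, and here actually the schedule only ever moves the $a$-group toward $b$ in the ``both move'' branch, never past it), or a group moves to a new single point. So the configuration stays supported on at most two locations forever, and never collapses to one. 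With this invariant in hand, the corollary follows immediately by invoking the contradiction of Theorem~\ref{thm:SSYNC Impossibility with lights}. I would phrase the final writeup as: assume an algorithm solves SUIG in SSYNC; instantiate it on a bivalent initial configuration with the crash at one of the two locations; observe the execution is isomorphic to a SUIR execution on which the adversary of Theorem~\ref{thm:SSYNC Impossibility with lights} prevents rendezvous forever; conclude gathering is never achieved, a contradiction.
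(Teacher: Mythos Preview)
Your reduction via a bivalent configuration is exactly the paper's approach, and the collocation-invariance observation is the right ingredient. But there is a genuine gap: you should \emph{not} install a crash. The adversary of Theorem~\ref{thm:SSYNC Impossibility with lights} builds a \emph{crash-free} fair execution; its power comes precisely from the branches where it activates only $r'$ (when $r'$ is dictated to stay idle while $r$ would jump onto $r'$) or activates both (when both would move). If all robots at $b$ are crashed, those branches disappear: the scheduler can then only decide \emph{when} to activate the $a$-group, never \emph{what} it does, and Lemma~\ref{lem:execution contain move to other} now cuts the other way---eventually the $a$-group is dictated to move to $b$, gathering succeeds, and no contradiction arises. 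Your ``Actually\ldots'' sentence gestures at the fairness argument, but your final writeup still carries the crash, and with it the proof does not go through. The fix is simply to drop the crash: both groups are correct, the schedule of Theorem~\ref{thm:SSYNC Impossibility with lights} is lifted verbatim (activating the $a$-group, the $b$-group, or both, according to the same case analysis), the resulting execution is fair and never gathers, contradicting the no-crash clause of SUIG. This is precisely what the paper does.

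One smaller point: because the corollary grants full lights (persistent, externally visible state), ``no multiplicity detection'' alone does not ensure that a group of collocated robots is indistinguishable from a single robot---two robots at the same point showing different colors are visibly distinct. The paper closes this by giving all robots in each group the same coordinate system \emph{and the same initial memory} (hence the same light), and always activating them together; you should state this explicitly for the collocation invariance to hold.
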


\begin{proof}
Let us start from a bivalent configuration, \ie, a $2n$-robot configuration where $n$ robots are located at a point $p_1$ and $n$ robots are located at a point $p_2\neq p_1$. All the robots in $p_1$, resp. robots in $p_2$, can be considered as a single robot (by assuming they have the same coordinate system, same initial memory, and if the adversary always activates them together). Hence, the impossibility of the SUIR problem by Theorem~\ref{thm:SSYNC Impossibility with lights} implies the impossibility of the SUIG problem.
\end{proof}

\section{Stand-Up Indulgent Rendezvous}\label{sec:rdv}

We focus in this section on the Stand-Up Indulgent Rendezvous (SUIR) problem. The SUIR problem is a special case of the SUIG problem as it involves only two robots.  We present an algorithm to introduce the notion of levels, that solves the problem in FSYNC by oblivious disoriented robots.

We first prove that having an algorithm solving the SUIR problem in a one-dimensional space implies the existence of an algorithm solving the same problem in a two-dimensional space. This is important as our algorithms are defined in the one-dimensional space. However, observe that the converse is not true. Indeed, we also present an example of a problem (the fault-free rendezvous with one common full axis) that is solvable in a two-dimensional space, but that cannot be reduced to the one-dimensional space. Despite the results being intuitive, the formal proof is not trivial.

\subsection{Reduction To One-dimensional Space}

\begin{theorem}\label{thm:1D to 2D}
Suppose there exists an algorithm solving the stand up indulgent, resp. fault-free, rendezvous problem where robots are restricted to a one-dimensional space. Then, there exists an algorithm solving the stand up indulgent, resp. fault-free, problem in a two-dimensional space. 
\end{theorem}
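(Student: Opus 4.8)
The plan is to take an algorithm $A_1$ that solves the (stand up indulgent, resp. fault-free) rendezvous problem in one dimension and use it as a black box to construct an algorithm $A_2$ in two dimensions. The central difficulty is that a robot in the plane, equipped only with its own disoriented coordinate system, must somehow extract a consistent one-dimensional picture of the configuration to feed to $A_1$, and it must do so in a way that is invariant under the adversary's choice of local frames. With only two robots this is actually manageable: at any round where the two robots occupy distinct positions, they span a unique line $L$, and $L$ is a geometric object independent of any coordinate system. So the first step is: whenever a robot $r$ sees the other robot at a distinct location, it identifies the line $L$ through the two positions, sets up the one-dimensional coordinate on $L$ in which its own position is $0$ and the other robot sits at distance equal to (its perceived) distance in the direction it perceives, runs $A_1$ on this one-dimensional view to obtain a target $x \in \R$, and moves to the corresponding point on $L$. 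When the two robots are already collocated, $A_2$ outputs ``stay'' (rendezvous, or the required gathering at the crash location, is already achieved).

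The key observation making this work is that $A_2$ keeps the two robots on the line $L$ forever: a target computed by $A_1$ is a real number, mapped to a point of $L$, and a non-rigid move stops the robot somewhere along the segment toward that point, hence still on $L$. Therefore $L$ is invariant across rounds (as long as the robots remain distinct), and the two-dimensional execution of $A_2$, when restricted to $L$ with a fixed identification $L \cong \R$, is exactly a legal one-dimensional execution of $A_1$ — same activation pattern (FSYNC, or a single crash), same adversarial stopping within $[\,0,\text{target}\,]$, same (lack of) agreement on unit distance and orientation, since an arbitrary disoriented planar frame restricted to $L$ yields an arbitrary disoriented one-dimensional frame. I would make this precise by exhibiting, for any execution $\mathcal{E}_2$ of $A_2$ from a distinct initial configuration, a corresponding execution $\mathcal{E}_1$ of $A_1$ in $\R$ with the same scheduling and crash pattern, such that $\mathcal{E}_2$ reaches a one-point configuration iff $\mathcal{E}_1$ does. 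Since $A_1$ solves the problem, $\mathcal{E}_1$ gathers in finite time, hence so does $\mathcal{E}_2$. The crash case transfers verbatim: a crashed robot in the plane is a robot that never moves off $L$ (indeed never moves at all), which is exactly a crashed robot on the line.

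The one remaining case is when the initial two-dimensional configuration is already a single point — then there is nothing to do, $A_2$ leaves it fixed. (By the remark in the model section we may assume the crash, if any, happens at round $0$, so a collocated initial configuration with a crashed robot is already a solved instance.) The main obstacle I anticipate is not conceptual but bookkeeping: one must check carefully that the identification $L \cong \R$ used by each robot is consistent enough across the two robots and across rounds that the induced one-dimensional execution is genuinely a valid $A_1$-execution under the one-dimensional adversary model — in particular that the one-dimensional adversary is allowed to pick, independently for each robot and each round, an arbitrary origin-preserving scaling/reflection of the axis, and an arbitrary stopping point on the traveled segment subject to the $\delta$ lower bound. Once the model correspondence is spelled out, the reduction is immediate; the asymmetry noted in the surrounding text (the converse failing for fault-free rendezvous with a shared axis) is not needed here and serves only as a cautionary example.
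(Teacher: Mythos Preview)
Your approach is essentially the same as the paper's: restrict to the line $L$ through the two robots, simulate $A_1$ there, and argue that robots never leave $L$ so the 2D execution projects to a valid 1D execution of $A_1$. The ``bookkeeping'' obstacle you flag is exactly the crux the paper addresses explicitly: it fixes each robot's orientation vector $v_i$ along $L$ by a lexicographic rule on the local view, so that the induced 1D frame for each robot is invariant across rounds (even when robots swap sides), and then exhibits the bijection $m$ between 2D and 1D configurations and checks the commutation $C\overset{A_2}{\to}C' \Rightarrow m(C)\overset{A_1}{\to}m(C')$.
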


\begin{proof}
Let $A_1$ be an algorithm solving the SUIR problem where robots are restricted to a one-dimensional space. We provide a constructive proof of the theorem by giving a new algorithm $A_2$ executed by robots in the two-dimensional space (the same procedure works with a fault-free rendezvous algorithm).

\newcommand{\vfunc}{\textbf{v}}

First, for a configuration $C = \{r_{min}, r_{max}\}$ of the robots in the two dimensional plane at time $t$, with $r_{min} < r_{max}$ (using the lexicographical order on their coordinates), we define the function $\vfunc$ as follows:

\[
\vfunc(\{r_{min}, r_{max}\}) = \frac{r_{\max} - r_{\min}}{\lVert r_{\max} - r_{\min} \rVert}
\]
Where the subtraction $b - a$ of two points $a$ and $b$ is the vector $\vec{ab}$ and $\lVert.\rVert$ denotes the Euclidean norm (so $\lVert r_{\max} - r_{\min} \rVert$ is the distance between $r_{\max}$ and $r_{\min}$). 

Let $r_1$ and $r_2$ denote the two robots, having transformation functions $h_1$ and $h_2$, respectively. Transformation functions are taken from a set of transformations containing rotations, scaling, reflections, and their compositions. We assume that the transformation function of a robot $r$ is chosen by an adversary, but does not change over time.
For each $i=1, 2$, let $V_i = \{(0,0), h_i(r_{3-i} - r_i) \}$ be the local view of robot $r_i$ at the current time $t$ (\ie, the set of positions of the robots seen in the local coordinate system of robot $r_i$). Each robot $r_i$ can compute its own orientation vector $v_i = \vfunc(V_i)$ of the line joining the two robots. Notice that, if robots remain on the same line, then $v_i$ remains invariant during the whole execution as long as robots do not gather. The top-left part of Figure~\ref{fig:mapping view 2D -> 1D} illustrates the view of robot $r_1$ and its vector $v_1$.

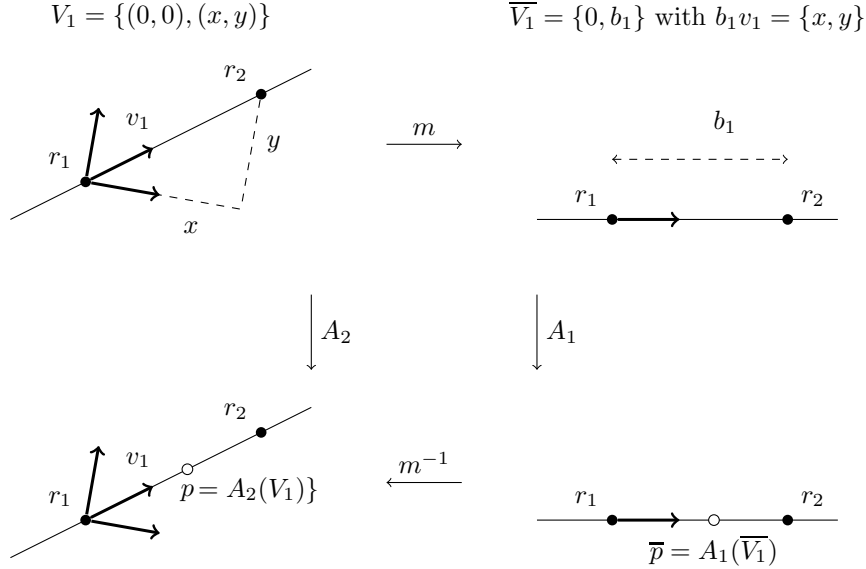
\begin{figure}
    \centering
    \begin{tikzpicture}
    \draw (0,0) -- (4,2);
    
    \node[fill=black, circle, inner sep=0.05cm, label={135:$r_1$}] (r1) at (4/4, 2/4) {};
    \node[fill=black, circle, inner sep=0.05cm, label={135:$r_2$}] (r2) at (4/1.2, 2/1.2) {};
    
    \draw[->, line width=0.04cm] (r1) -- ++ (80:1);
    \draw[->, line width=0.04cm] (r1) -- ++ (-10:1);
    \draw[->, line width=0.04cm] (r1) -- ++ ({2/(sqrt(5)},{1/(sqrt(5)});

    \node at (2, 2.7) {$V_1 = \{(0,0), (x,y)\}$};
    \draw[dashed] (r1) -- ++ (-10:2.1);
    \draw[dashed] (r2) -- ++ (-100:1.55);
    \node at (2.4, -0.1) {$x$};
    \node at (3.5, 1) {$y$};
    \node at (1.7, 1.3) {$v_1$};
    
    \draw[->] (4,-1) -- node[right] {$A_2$} (4,-2);
    \draw[->] (5,1) -- node[above] {$m$} (6,1);
    
    \begin{scope}[xshift=7cm, yshift=0cm]
    \node[fill=black, circle, inner sep=0.05cm, label={135:$r_1$}] (r1) at (4/4, 0/4) {};
    \draw (0,0) -- (4,0);
    \draw[->, line width=0.04cm] (r1) -- ++ ({2/(sqrt(5)},0);
    
    \node[fill=black, circle, inner sep=0.05cm, label={45:$r_2$}] (r2) at (4/1.2, 0) {};
    
    \node at (2, 2.7) {$\overline{V_1} = \{0, b_1\}$ with $b_1v_1 = \{x,y\}$};
    
    \draw[<->, dashed] (1,0.8) -- (4/1.2,0.8);
    \node at (2.5, 1.3) {$b_1$};
    
    \draw[->] (0,-1) -- node[right] {$A_1$} (0,-2);
    \end{scope}

    \begin{scope}[xshift=7cm, yshift=-4cm]
    \node[fill=black, circle, inner sep=0.05cm, label={135:$r_1$}] (r1) at (4/4, 0/4) {};
    \draw (0,0) -- (4,0);
    \draw[->, line width=0.04cm] (r1) -- ++ ({2/(sqrt(5)},0);
    
    \node[fill=black, circle, inner sep=0.05cm, label={45:$r_2$}] (r2) at (4/1.2, 0) {};
    \node[fill=white, draw=black, circle, inner sep=0.05cm, label={-90:$\overline{p}=A_1(\overline{V_1})$}] at (4/1.7, 0) {};

\end{scope}

\begin{scope}[yshift=-4.5cm]
    \draw (0,0) -- (4,2);
    
    \node[fill=black, circle, inner sep=0.05cm, label={135:$r_1$}] (r1) at (4/4, 2/4) {};
    \node[fill=black, circle, inner sep=0.05cm, label={135:$r_2$}] (r2) at (4/1.2, 2/1.2) {};

    \draw[->, line width=0.04cm] (r1) -- ++ (80:1);
    \draw[->, line width=0.04cm] (r1) -- ++ (-10:1);
    \draw[->, line width=0.04cm] (r1) -- ++ ({2/(sqrt(5)},{1/(sqrt(5)});
    \node at (3.3, 0.9) {$=A_2(V_1)\}$};
    
    \node at (1.7, 1.3) {$v_1$};

    \node[fill=white, draw=black, circle, inner sep=0.05cm, label={-90:$p$}] at (4/1.7, 2/1.7) {};
    
    \draw[<-] (5,1) -- node[above] {$m^{-1}$} (6,1);
\end{scope}    
    
    \end{tikzpicture}
    \caption{Relation between the views of $r_1$ in the two-dimensional plane, and in the one-dimensional line. The target destination in the plane is obtained by applying the inverse mapping of the target destination computed on the line.}
    \label{fig:mapping view 2D -> 1D}
\end{figure}

We define algorithm $A_2$, executed by robots in the two-dimensional space as follows. First, if the local view of a robot $r_i$ is $\{(0,0)\}$, then $A_2$ outputs $(0,0)$. 

Otherwise, $r_i$ can map its local view $V_i$ in a one-dimensional space to obtain $\overline{V_i} = \{0, b_i\}$ with $b_i$ such that $h_i(r_{3-i} - r_i) = b_i v_i$, and execute $A_1$ on $\overline{V_i}$. The obtained destination $\overline{p} = A_1(\overline{V_i})$, is then converted back to the two-dimensional space to obtain the destination $p = \overline{p} v_i$, as illustrated in Figure~\ref{fig:mapping view 2D -> 1D}.
In doing so, the robots remain on the same line, and $v_i$ remains invariant while the robots are not gathered (when they are gathered, both algorithms stop).

Let $E = \left(C_0, C_1, \ldots \right)$ be an arbitrary execution of $A_2$. We want to construct from $E$ an execution $\overline{E}$ of $A_1$ such that the rendezvous is achieved in $\overline{E}$ if and only if the rendezvous is achieved in $E$.

Recall that we analyze each configuration $C_i$, $i\in\N$, using $Z$, the global coordinate system we use for the analysis. Let $v = \vfunc(C_0)$. Again, since robots remain on the same line, then $v = \vfunc(C_i)$ for any $C_i$ while robots are not yet gathered.

Let $O$ be any point of the line $L$ joining the two robots.
We define as follows the bijection $m$ mapping points of $L$ (in $Z$), to the global one-dimensional coordinate system $(O,v)$:
\[
\forall a\in\R, m(O + a v) = a
\]
We can extend the function $m$ to configurations as follows: \\$m(C) = \{m(r)\;|\; r\in C\}$.

Now, let $\overline{E}=\left(\overline{C_0}, \overline{C_1},\ldots\right)$ be the execution of $A_1$, in $(O,v)$, of two robots having transformation function $\overline{h}_1$ and $\overline{h}_2$ respectively, with $\overline{h}_i(a) = b$ if and only if $h_i(a v) = b v_i$.

\tolerance=1000
We now show that, if $C\overset{A_2}{\rightarrow} C'$ then $m(C)\overset{A_1}{\rightarrow} m(C')$. To do so we show that the result of executing $A_1$ on $m(C)$ coincides with $m(C')$. 
Let ${C=\{O+a_1v, O+a_2v\}}$, $i$ be an activated robot and $j$ be the other robot. On one hand, we have $m(C) = \{a_1, a_2\}$ and, by construction, the view $\overline{V_i}$ of robot $i$ in $m(C)$ is $\{0, b_i\}$ with $b_i = \overline{h}_i(a_j - a_i)$, so that the global destination of $r_i$ in $m(C)$ is then $a_i + \overline{h}_i^{-1}(\overline{p})$ (with $\overline{p} = A_1(\overline{V_i})$). 
On the other hand, the view ${V_i}$ of robot $i$ is $\{(0,0), h_i((a_j-a_i)v)\} = \{(0,0), b_iv_i\}$, so that the global destination of $r_i$ in $Z$ is $O+a_iv + h_i^{-1}(\overline{p}v_i) = O+a_iv + \overline{h}_i^{-1}(\overline{p})v $. Since $m( O+a_iv + \overline{h}_i^{-1}(\overline{p})v) = a_i + \overline{h}_i^{-1}(\overline{p})$, we obtain that $m(C)\overset{A_1}{\rightarrow} m(C')$ (assuming the same robots are activated in $C$ and in $m(C)$.

Hence, $\overline{C_i} = m(C_i)$ for all $i\in \N$.
Since $A_1$ solves the SUIR problem, there exists a point $p\in \R$ and a round $t$ such that, for all $t'\geq t$, $m(C_{t'}) = \{p\}$. This implies that $C_{t'} = \{ O + p v \}$, so that $A_2$ solves the SUIR problem.
\end{proof}

Observe that the converse of Theorem~\ref{thm:1D to 2D} is not true. To prove this statement, we present an algorithm solving the (fault-free) rendezvous problem in a two-dimensional space, assuming robots agree on one full axis (that is, they agree on the direction and the orientation of the axis). Under this assumption, it is possible that the robots do not agree on the orientation of the line joining them, so that assuming the converse of Theorem~\ref{thm:1D to 2D} would imply the existence of an algorithm in the one-dimensional space with disoriented robots (which does not exist, using a similar proof as the one given in~\cite{SuzukiY99}).

The idea is that, if the configuration is symmetric (robots may have the same view), then robots move to the point that forms, with the two robots, an equilateral triangle. Since two such points exist, the robots choose the northernmost one (the robots agree on the $y$-axis, which provides a common North). Otherwise, the configuration is not symmetric, and there is a unique northernmost robot $r$. This robot does not move and the other robot moves to $r$.

\begin{algorithm}[H] 
\KwData{
  $r$ : robot executing the algorithm
 }
Let $\{(0,0), (x,y)\}$ be $r$'s local view.\\
\uIf{$y = 0$}{
    \nl\label{algo:fault-free rdv:move to triangle} move to the point $\left(x/2, \left|\frac{x\sqrt{3}}{2}\right|\right)$
}
\uElseIf{$y > 0$}{
    \nl\label{algo:fault-free rdv:move to other} move to the other robot, at $(x,y)$.
}
 \caption{Fault-free rendezvous. Robots agree on one full axis, may not have a common unit distance, and movements are non-rigid}\label{algo:one common axis}
\end{algorithm}

\begin{theorem}\label{thm: proof of algo, one axis, non-rigid}
Algorithm~\ref{algo:one common axis} solves the (fault-free) rendezvous problem with non-rigid movements and robots having only one common full axis, and different unit distances.
\end{theorem}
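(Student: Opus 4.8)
The plan is to show two things: (i) the algorithm makes progress whenever the robots are not yet gathered, and (ii) once a robot starts moving toward the other robot's position (the case $y>0$), rendezvous happens in the very next round, regardless of the non-rigid adversary. The key structural observation is that the only configuration that is symmetric (the two robots having identical local views) is the one where the line joining them is horizontal, \emph{i.e.} where the $y$-coordinate of each robot's view of the other is $0$; in every other configuration exactly one robot is strictly northernmost, and the common $y$-axis lets both robots agree on which one it is.

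First I would set up coordinates using the global system $Z$ and, without loss of generality, assume the common axis is the $y$-axis, so both robots' local coordinate systems are translates of $Z$ (up to a possibly different unit distance along each axis, but with the same notion of North and the same orientation of both axes — reflections and rotations are excluded because a full axis is shared). I would then distinguish the two cases. \textbf{Case 1: the robots are not on a horizontal line.} Then one robot, say $r_b$, is strictly below the other, $r_a$. In $r_a$'s view the other robot has $y<0$, so line~\ref{algo:fault-free rdv:move to other} does not apply and line~\ref{algo:fault-free rdv:move to triangle} does not apply either ($y\neq 0$); wait — I need to recheck the algorithm's branches. The algorithm only has the branches $y=0$ and $y>0$, so a robot that sees the other with $y<0$ is given no instruction and stays idle. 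Thus $r_a$ (the northernmost) is idle, and $r_b$ sees the other robot with $y>0$ and by line~\ref{algo:fault-free rdv:move to other} moves toward $r_a$'s position. Since $r_a$ does not move, $r_b$ travels along the fixed segment toward $r_a$; by the non-rigid model it advances at least $\delta$, and $r_a$'s position is a fixed target, so after finitely many rounds $r_b$ reaches it and rendezvous is achieved. Here I should note that during this phase the configuration stays non-horizontal (both robots keep their relative North/South roles) until the round in which they coincide, so the case analysis is stable.

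\textbf{Case 2: the robots are on a horizontal line} ($y=0$ in both views). Then both robots are activated (FSYNC) and both execute line~\ref{algo:fault-free rdv:move to triangle}: each moves toward the apex of the equilateral triangle on its own side, but the tie-break ``northernmost of the two apexes'' is expressed as $\left(x/2, |x\sqrt3/2|\right)$, which has positive $y$-coordinate in the shared orientation, so \emph{both} robots target the \emph{same} point $p$ (the unique apex lying to the North of the common line). Now the adversary may stop each robot early, so after one round we are in a configuration where both robots lie somewhere on their respective segments toward $p$. I would argue that this new configuration is no longer horizontal unless both robots actually reached $p$ (in which case we are done): because the two segments meet only at $p$ and the robots started at distinct points, if neither has reached $p$ their new positions are distinct and not collinear with a horizontal line through... — here is the one subtlety to nail down: I must verify that the two new positions, together, do not lie on a horizontal line either. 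Since $p$ is strictly North of the original horizontal line and both robots moved a positive distance toward $p$ along non-horizontal segments, both strictly increased their $y$-coordinate; but they need not increase it by the same amount, so generically they are at different heights and Case 1 applies in the next round. The only bad sub-case is if they happen to end at exactly the same height without coinciding; I would handle this by observing that then they are symmetric again with a shorter horizontal separation, and — this is the point that needs care — I must rule out an infinite oscillation. The clean way is: either both reach $p$ in one round (done), or at least one falls short; if exactly one falls short, the next configuration is non-horizontal and Case 1 finishes it; if both fall short to the same height, repeat, but note the horizontal distance between them is strictly smaller than $\lVert r_a - r_b\rVert/2 \cdot(\text{something})$...

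The main obstacle, then, is Case 2: showing that the ``guess the North apex'' maneuver cannot be defeated forever by an adversary that keeps restoring a (smaller) symmetric horizontal configuration. I expect the resolution to be that each time we stay in the symmetric case, the robots are strictly closer (the equilateral-triangle target is strictly closer to each robot than the other robot's position when $\delta$ is small, and the geometry forces the horizontal gap to shrink by a fixed ratio bounded away from $1$), so after finitely many rounds the remaining distance is below $\delta$ and the robots are forced to coincide at $p$; the formal statement would quantify the shrink factor in terms of the equilateral geometry and $\delta$. Once that geometric contraction is established, combining it with the (immediate) termination in Case 1 yields rendezvous in finitely many rounds, proving the theorem.
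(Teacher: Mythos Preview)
Your decomposition into the asymmetric (non-horizontal) and symmetric (horizontal) cases is exactly the paper's, and your treatment of the asymmetric case matches the paper's argument: the northernmost robot stays put, the other moves along a fixed segment toward it, advancing at least $\delta$ per round, and the configuration remains asymmetric until they coincide.

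In the symmetric case your plan has the right shape but the quantification of the shrinkage is off. First, the apex of the equilateral triangle is at distance exactly $d$ from each robot, not strictly closer. Second, and more importantly, the claim that ``the horizontal gap shrinks by a fixed ratio bounded away from $1$'' is false: the adversary may stop each robot after travelling exactly $\delta$, so from horizontal distance $d$ the new distance is $d-\delta$, and the ratio $(d-\delta)/d$ tends to $1$ as $d$ grows. The correct (and simpler) observation---the one the paper uses---is that the shrinkage is \emph{additive}: if both robots travel at least $\delta$ along the $60^\circ$ segments toward the common apex and end at the same height, each has moved at least $\delta\cos 60^\circ=\delta/2$ horizontally toward the midpoint, so the inter-robot distance drops by at least $\delta$. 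A fixed additive decrease per symmetric round brings the distance below $\delta$ in finitely many rounds, at which point both robots necessarily reach the apex (their distance to it equals their mutual distance, now at most $\delta$) and gather. With this one correction your argument goes through. One minor aside: sharing one full axis does \emph{not} rule out a reflection across that axis, so the two robots may disagree on the $x$-orientation; what makes them target the same apex is the absolute value in the formula $\bigl(x/2,\,|x\sqrt3/2|\bigr)$, not the absence of reflections.
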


\begin{proof}
If the configuration is not symmetric, then, after executing Line~\ref{algo:fault-free rdv:move to other}, either the rendezvous is achieved, or the moving robots remain on the same line joining the robots, so the obtained configuration is still asymmetric, and the same robot is dictated to move towards the same destination, so it reaches it in a finite number of rounds.

Consider now that the initial configuration $C$ is symmetric. 
If both robots reach their destination, the rendezvous is completed in one round. If robots are stopped before reaching their destinations, two cases can occur. Either they are stopped after traveling different distances, or they are stopped at the same $y$-coordinate. In the former case, the obtained configuration is asymmetric and we retrieve the first case. In the latter, the configuration remains symmetric, but the distance between the two robots decreases by at least $\frac{2\delta}{\sqrt{2}}$.
Hence, at each round either robots complete the rendezvous, reach an asymmetric configuration, or come closer by a fixed distance. Since the latter case cannot occur infinitely, one of the other cases occurs at least once, and the rendezvous is completed in a finite number of rounds.
\end{proof}

\subsection{SUIR Algorithm.}

We consider in the following disoriented robots (\ie they agree neither on an axis, nor on a common unit distance) and assume that the move operations are rigid. We present Algorithm~\ref{algo:disoriented FSYNC} to solve the SUIR problem under such assumptions. The algorithm is defined on the line. Each robot sees the line oriented in some way, but robots might not agree on the orientation of the line. However, since the orientation of the line is deduced from the robot's own coordinate system, it does not change over time.
The different moves of a robot $r$ depend on whether $r$ sees itself on the left or the right of the other robot, and on its \emph{level}. The level of a robot at distance $d$ from the other robot (according to its own coordinate system, hence its own unit distance) is the integer $i\in\Z$ such that $d\in [2^{-i}, 2^{1-i})$.
In the pseudo-code, we write

\[
\mathit{left}\rightarrow \texttt{Move1} ; \mathit{right}\rightarrow \texttt{Move2}
\] 
to notify that a robot that sees itself on the left, resp. on the right, should execute \texttt{Move1}, resp. \texttt{Move2}.

\begin{algorithm}[H]
Let $d$ be the distance to the other robot

Let $i\in\mathbb{Z}$ such that $d\in [2^{-i}, 2^{1-i})$\;
\lIf{$i \equiv 0 \mod 2$}{
    move to the middle
}
\lIf{$i \equiv 1 \mod 4$}{
    \emph{left} $\rightarrow$ move to middle ;
    \emph{right} $\rightarrow$ move to other
}
\lIf{$i \equiv 3 \mod 4$}{
    \emph{left} $\rightarrow$ move to other ;
    \emph{right} $\rightarrow$ move to middle
}
 \caption{SUIR Algorithm for disoriented robots}\label{algo:disoriented FSYNC}
\end{algorithm}

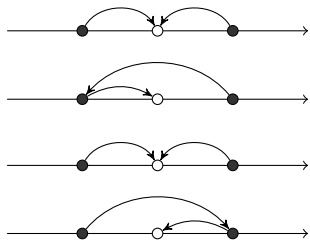
\begin{figure}[h]
    \centering
\begin{tikzpicture}
\draw[->] (-1,0) -- (3,0);
\node[robot] (a) at (0,0) {};
\node[robot] (b) at (2,0) {};
\node[dest] (m) at (1,0) {};
\node[opacity=0, text opacity=1] at (-4,0) {case $i\equiv 0\mod 4$};

\path (a) edge[bend left=60,->,>=stealth'] node [left] {} (m);
\path (b) edge[bend right=60,->,>=stealth'] node [left] {} (m);
\end{tikzpicture}

\begin{tikzpicture}
\draw[-, white] (-5,-0.2) -- (2.5,0.5); \draw[->] (-1,0) -- (3,0);
\node[robot] (a) at (0,0) {};
\node[robot] (b) at (2,0) {};
\node[dest] (m) at (1,0) {};
\node[opacity=0, text opacity=1] at (-4,0) {case $i\equiv 1\mod 4$};

\path (a) edge[bend left=30,->,>=stealth'] node [left] {} (m);
\path (b) edge[bend right=50,->,>=stealth'] node [left] {} (a);
\end{tikzpicture}

\begin{tikzpicture}
\draw[-, white] (-5,-0.2) -- (2.7,0.6); \draw[->] (-1,0) -- (3,0);
\node[robot] (a) at (0,0) {};
\node[robot] (b) at (2,0) {};
\node[dest] (m) at (1,0) {};
\node[opacity=0, text opacity=1] at (-4,0) {case $i\equiv 2\mod 4$};

\path (a) edge[bend left=60,->,>=stealth'] node [left] {} (m);
\path (b) edge[bend right=60,->,>=stealth'] node [left] {} (m);
\end{tikzpicture}

\begin{tikzpicture}
\draw[-, white] (-5,-0.2) -- (3,0.5); \draw[->] (-1,0) -- (3,0);
\node[robot] (a) at (0,0) {};
\node[robot] (b) at (2,0) {};
\node[dest] (m) at (1,0) {};
\node[opacity=0, text opacity=1] at (-4,0) {case $i\equiv 3\mod 4$};

\path (a) edge[bend left=50,->,>=stealth'] node [left] {} (b);
\path (b) edge[bend right=30,->,>=stealth'] node [left] {} (m);
\end{tikzpicture}
    \caption{The four possible configurations, depending the distance between the two robots. We have split the case $i\equiv 0 \mod 2$ into two lines to help the reader.}
    \label{fig:four possible configurations}
\end{figure}

Figure~\ref{fig:four possible configurations} summarizes the eight possible views of a robot $r$, and the corresponding movements. Each line represents the congruence of the level of the robot modulo four, and on each line, we see the movement of the robot whether it sees itself on the right or on the left of the other robot. Note that the cases described in Figure~\ref{fig:four possible configurations} do not necessarily imply that both robots actually perform the corresponding movement at the same time (since they may have a different view).

For instance, if a robot $r_1$ has a level $i_1$ congruent to 1 modulo 4 and sees itself on the right, while the other robot $r_2$ has a level $i_2$ congruent to 2 modulo 4, and also sees itself on the right, then $r_1$ moves to the other robot position, and $r_2$ moves to the middle. Assuming both robots reach their destination, then the distance between them is divided by two (regardless of the coordinate system) so their levels increase by one, and they both see the other robot on the other side, so each robot now sees the other robot on its left. Now, $i_1$ congruent to 2 modulo 4 while $i_2$ is congruent to 3 modulo 4. As both robots observe themselves on the left side, $r_1$ moves to the middle while $r_2$ moves to the other robot's position. Again, as the distance between them is divided by two, their levels increase by one and since they have switched their positions they again observe themselves on the right side. In the new configuration, $i_1$ congruent to 3 modulo 4 while $i_2$ is congruent to 0 modulo 4. By executing Algorithm \ref{algo:disoriented FSYNC}, both robots move to the middle and the rendezvous is achieved. Figure~\ref{fig:exple} illustrates this execution. 

\begin{figure}[h]
    \centering
\begin{tikzpicture}
\draw[-, white] (-6,-0.2) -- (2,0.5); \draw[-] (-2,0) -- (2,0);
\node[robot] (a) at (-1,0) {};
\node[robot] (b) at (1,0) {};
\node[dest] (m) at (0,0) {};

\node[opacity=0, text opacity=1] at (-4,0) {(right) $i_1\equiv 1\mod 4$};
\node[opacity=0, text opacity=1] at (4,0) {(right) $i_2\equiv 2\mod 4$};

\path (a) edge[bend left=50,->,>=stealth'] node [left] {} (b);
\path (b) edge[bend right=30,->,>=stealth'] node [left] {} (m);

\end{tikzpicture}

\begin{tikzpicture}
\draw[-, white] (-6,-0.2) -- (2,0.5); \draw[-] (-2,0) -- (2,0);
\node[robot] (a) at (-1,0) {};
\node[robot] (b) at (1,0) {};
\node[dest] (m) at (0,0) {};

\node[opacity=0, text opacity=1] at (-4,0) {(left) $i_2\equiv 3\mod 4$};
\node[opacity=0, text opacity=1] at (4,0) {(left) $i_1\equiv 2\mod 4$};

\path (a) edge[bend left=50,->,>=stealth'] node [left] {} (b);
\path (b) edge[bend right=30,->,>=stealth'] node [left] {} (m);
\end{tikzpicture}

\begin{tikzpicture}
\draw[-, white] (-6,-0.2) -- (2,0.5); \draw[-] (-2,0) -- (2,0);
\node[robot] (a) at (-1,0) {};
\node[robot] (b) at (1,0) {};
\node[dest] (m) at (0,0) {};

\node[opacity=0, text opacity=1] at (-4,0) {(right) $i_1\equiv 3\mod 4$};
\node[opacity=0, text opacity=1] at (4,0) {(right) $i_2\equiv 0\mod 4$};

\path (a) edge[bend left=50,->,>=stealth'] node [left] {} (m);
\path (b) edge[bend right=50,->,>=stealth'] node [left] {} (m);

\end{tikzpicture}
    \caption{A possible execution of Algorithm \ref{algo:disoriented FSYNC} }
    \label{fig:exple}
\end{figure}
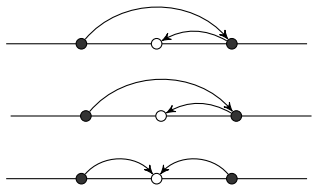

\subsection{Proof of Correctness.} We show in the following the correctness of Algorithm \ref{algo:disoriented FSYNC}. Let $C$ be any configuration and $d$ is the distance (in the global coordinate system $Z$) between the two robots. Let $x$, resp. $y$, be the distance, in $Z$, traveled by the left robot, resp. the right robot. Since the robots move toward each other, after executing one round, the distance between the robot becomes $f(d,x,y) = |d - x - y|$.

\begin{lemma}\label{lem:f decreases by fixed distance}
If at least one robot is dictated to move to the middle, then we have $f(d,x,y) \leq d - \min(\delta, d/2)$.
\end{lemma}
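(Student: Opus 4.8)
The plan is to do a direct case analysis on which robot(s) move to the middle, bounding the distance traveled toward the other robot in each case. Recall $f(d,x,y) = |d - x - y|$, where $x$ and $y$ are the distances traveled (in $Z$) by the left and right robot respectively, each satisfying $0 \le x \le d$, $0 \le y \le d$, and $x + y \le d$ when both move toward each other (so in fact $f(d,x,y) = d - x - y$ here, but I will keep the absolute value for safety and note that the hypothesis ``move toward each other'' gives $x,y \ge 0$). A robot that ``moves to the middle'' aims for the midpoint, hence aims to travel distance $d/2$; under rigid movements it would travel exactly $d/2$, but under non-rigid movements the adversary may stop it early, so it travels at least $\min(\delta, d/2)$ (it cannot overshoot the midpoint since that is its target, and it travels at least $\delta$ unless the target is within distance $\delta$, in which case it travels the full $d/2$).

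The key steps, in order: \textbf{(1)} Fix a robot, say the left one, that is dictated to move to the middle; call the distance it actually travels $x$. I claim $x \ge \min(\delta, d/2)$. Indeed, its target is at distance exactly $d/2$ from its current position; if $d/2 \le \delta$ it reaches the target and $x = d/2 = \min(\delta,d/2)$; otherwise the adversary guarantees it travels at least $\delta = \min(\delta, d/2)$. \textbf{(2)} Since both robots move toward each other (as stated just before the lemma), we have $x \ge 0$ and $y \ge 0$, and moreover a robot never moves past the other robot's position (its target is either the midpoint or the other robot, both at distance $\le d$), so $x + y \le d$ is not needed — what I need is just $f(d,x,y) = |d - x - y| \le d - x \le d - \min(\delta, d/2)$, using $y \ge 0$ and $x + y \le d$ to drop the absolute value, or alternatively observing $0 \le d - x - y \le d - x$. \textbf{(3)} Conclude $f(d,x,y) \le d - x \le d - \min(\delta, d/2)$, which is the claim. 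The symmetric argument handles the case where it is the right robot (or both) that moves to the middle.

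The main obstacle — really the only subtle point — is justifying that $x + y \le d$, i.e. that the two robots do not ``cross'' in a way that makes $d - x - y$ negative and large in absolute value. This needs a short argument that each robot's destination lies on the segment between the two robots (the midpoint clearly does; ``move to other'' targets the other robot's position, the segment's endpoint), hence each robot stays within the closed segment, hence the travelled distances satisfy $x + y \le d$ and $0 \le d - x - y$. Once that is in place the bound is immediate. I would state this as a one-line observation and then give the two-line case analysis above; the whole proof should be under half a page.

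\begin{proof}
Since the two robots move toward each other along the segment joining them, and each one's target (the midpoint, or the other robot's position) lies on that segment, both robots remain inside the closed segment after the round. Hence $0 \le x$, $0 \le y$, and $x + y \le d$, so $f(d,x,y) = |d-x-y| = d - x - y \le d - x$.

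Now suppose the left robot is dictated to move to the middle (the case of the right robot is symmetric, and if both move to the middle we may consider either one). Its target is at distance exactly $d/2$ from its current position. If $d/2 \le \delta$, the robot reaches its target and $x = d/2 = \min(\delta, d/2)$. Otherwise, the adversary guarantees the robot travels at least $\delta$, so $x \ge \delta = \min(\delta, d/2)$. In both cases $x \ge \min(\delta, d/2)$, and therefore
\[
f(d,x,y) \le d - x \le d - \min(\delta, d/2). \qedhere
\]
\end{proof}
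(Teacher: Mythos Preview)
Your argument contains a genuine gap: the claim that $x + y \le d$ is false. You infer it from the fact that both robots stay inside the closed segment, but staying in the segment only gives $0 \le x \le d$ and $0 \le y \le d$ separately; it says nothing about the sum. Concretely, take the case where the left robot is dictated to move to the middle and the right robot is dictated to move to the other robot's position, both rigidly. Then $x = d/2$ and $y = d$, so $x + y = 3d/2 > d$ and $d - x - y = -d/2$. Both robots are still inside the segment (now at positions $d/2$ and $0$, having crossed), yet your identity $f(d,x,y) = d - x - y$ fails and the bound $f \le d - x$ does not follow.

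The paper's proof is built precisely around this crossing possibility: it observes that since at least one robot targets the midpoint, one of $x,y$ is at most $d/2$ while the other is at most $d$, hence $x+y \in [\min(\delta, d/2),\; 3d/2]$; then it uses the convexity of $w \mapsto |d - w|$ to bound $f$ by the larger of its values at the two endpoints, namely $\max\bigl(d/2,\; d - \min(\delta, d/2)\bigr) = d - \min(\delta, d/2)$. Your approach can be repaired by treating the case $x+y > d$ separately (there $f = x+y-d \le 3d/2 - d = d/2 \le d - \min(\delta,d/2)$), but as written the proof is incomplete.

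A smaller secondary issue: your lower bound $x \ge \min(\delta, d/2)$ tacitly assumes the robot you selected actually moves, but in the crash-tolerant setting that very robot may be the crashed one and travel distance $0$. The paper handles this by bounding the sum $x+y$ directly and using only that not both of $x,y$ can be zero.
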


\begin{proof}
For any fixed $d$, using the symmetry of $f$ (with respect to the second and third argument), we have $f(d,x,y) = g_d(x+y)$ with $g_d: w\mapsto |d - w|$.
We know that the distance traveled by the robots is either $0$ (if one robot crashes), or at least $\min(d/2, \delta)$, but we cannot have $x=y=0$. Also, since at least one robot moves to the middle, we have either \emph{(i)} $x \leq d/2$ and $y\leq d$, or \emph{(ii)} $x \leq d$ and $y\leq d/2$. Hence, the sum $x+y$ is in the interval $[\min(d/2, \delta), 3d/2]$. 

As a convex function, the maximum of $g_d$ is reached at the boundary of its domain
\begin{align*}
    f(d,x,y) = g_d(x+y) &\leq \max(g_d(3d/2), g_d(\min(d/2, \delta)))\\
&\leq \max(d/2, d - \min(d/2, \delta)) =  d - \min(d/2, \delta)
\end{align*}
\end{proof}

The next lemma is a direct consequence of Lemma~\ref{lem:f decreases by fixed distance}.{
\begin{lemma}\label{lem:distance decreases by fixed distance}
Two rounds after any configuration where robots are at distance $d$ (in $Z$), the distance between the robots decreases by at least $\min(\delta, d/2)$.
\end{lemma}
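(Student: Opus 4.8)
The plan is to deduce Lemma~\ref{lem:distance decreases by fixed distance} from Lemma~\ref{lem:f decreases by fixed distance} by showing that, within any two consecutive rounds, at least one round features a configuration in which some robot is dictated to move to the middle. The key observation is that the algorithm only ever dictates a robot to \emph{not} move to the middle when its level $i$ is odd, and in that case exactly one of the two robots (the one seeing itself on the appropriate side) moves to the other's position while the partner — having a possibly different level — might also move to the other's position. So the only way no robot moves to the middle in a given round is that both robots have odd levels and both are dictated to "move to other"; I will argue this situation cannot persist for two rounds.

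First I would set up the case analysis on the level of each robot in the starting configuration $C$. If at least one robot has an even level, then by the algorithm that robot moves to the middle, Lemma~\ref{lem:f decreases by fixed distance} applies directly to the first round, and since the distance can only decrease, the two-round claim follows a fortiori. The remaining case is that both robots have odd levels $i_1, i_2$ in $C$. Here I would examine the sub-cases according to $i_1, i_2 \bmod 4$ and the left/right perception of each robot. If the two robots perceive opposite sides of the line (which is the generic situation when they have not just swapped), then one of them is dictated to move to the middle and again Lemma~\ref{lem:f decreases by fixed distance} closes round one. The delicate sub-case is when both are dictated to "move to other" (for instance $i_1 \equiv 1$, right, and $i_2 \equiv 2$… wait, $2$ is even) — concretely, this requires, say, $i_1 \equiv 1 \bmod 4$ with $r_1$ on the right and $i_2 \equiv 3 \bmod 4$ with $r_2$ on the left, so both head to the other's position.

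In that stubborn sub-case I would show the second round is favorable: since both robots move to the other's position and movements are rigid (as assumed in this subsection), they swap places exactly, the distance is unchanged, both levels are unchanged, but each robot now perceives itself on the \emph{opposite} side of the line. Thus $r_1$ now sees itself on the left with $i_1 \equiv 1 \bmod 4$, so it is dictated to move to the middle; Lemma~\ref{lem:f decreases by fixed distance} then applies to the second round, giving a decrease of at least $\min(\delta, d'/2)$ where $d'$ is the (unchanged) distance $d$, which is exactly the claimed bound. I should also handle the symmetric labelings ($i_1 \equiv 3$, $i_2 \equiv 1$, etc.) the same way, and note that any crash only helps: a crashed robot simply does not move, which never creates a "no robot moves to middle" situation that wasn't already covered, and the correct robot's prescription is unaffected.

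The main obstacle I anticipate is bookkeeping the interaction between the two robots' \emph{independent} level parities and side-perceptions: because the two robots may disagree on the orientation of the line and have different unit distances, their levels $i_1, i_2$ need not be equal, and a clean invariant over two rounds must be extracted from what looks like many cases. The saving grace is that the algorithm is designed so that "move to other" is prescribed only for odd levels and only on one fixed side, so the bad event is narrowly characterized, and a swap (forced by rigidity when both do "move to other") flips the side and thereby necessarily unlocks a "move to middle" on the next round. I would present the argument as: (i) even level in $C$ $\Rightarrow$ done by Lemma~\ref{lem:f decreases by fixed distance}; (ii) both odd, some robot moves to middle in $C$ $\Rightarrow$ done; (iii) both odd, both "move to other" $\Rightarrow$ rigid swap, new configuration has a robot moving to middle, apply Lemma~\ref{lem:f decreases by fixed distance} in round two; in all cases, since the distance is non-increasing, the guaranteed decrease is at least $\min(\delta, d/2)$ over the two rounds.
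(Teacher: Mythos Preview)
Your overall decomposition matches the paper's: reduce to the case where both robots are dictated to ``move to other'', then argue that after one such round the next round has a robot moving to the middle, so Lemma~\ref{lem:f decreases by fixed distance} applies. However, there is a genuine gap.

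You write that ``movements are rigid (as assumed in this subsection)'' and conclude that the two robots swap places exactly, with distance and levels unchanged. This assumption is not available here: the model is explicitly non-rigid, and rigidity is only invoked \emph{after} this lemma, precisely because this lemma is what drives the distance below $\delta$. Assuming rigidity to prove it is circular. Under non-rigid movements, when both robots execute ``move to other'' the new distance is $|d-x-y|$ with $x,y\in[\min(\delta,d),d]$ (or one of them $0$ if a robot has crashed), so the robots need not swap exactly: the distance can land anywhere in $[0,d]$, and in particular the level of each robot may increase by one rather than stay fixed.

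The paper closes this gap as follows. After a ``both move to other'' round, either the new distance is already at most $d-\min(\delta,d/2)$ (done), or it lies in $(d-\min(\delta,d/2),\,d]$. In the latter range one checks that the robots have crossed (order changed) and each level is either unchanged or increased by exactly one. Then one verifies the four concrete possibilities: a right robot with level $\equiv 1\bmod 4$ becomes a left robot with level $\equiv 1$ or $2\bmod 4$; a left robot with level $\equiv 3\bmod 4$ becomes a right robot with level $\equiv 3$ or $0\bmod 4$. In every one of these, the algorithm now dictates ``move to middle'', so Lemma~\ref{lem:f decreases by fixed distance} applies in the second round. Your argument recovers only the ``level unchanged'' half of this; to be complete you must drop the rigidity assumption and also handle the level$+1$ branch.
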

}{
\begin{proof}
From Lemma \ref{lem:f decreases by fixed distance}, we know that if at least one robot is dictated to move to the middle, then, after one round, the distance between the two robots decreases by at least $\min(\delta, d/2)$. 
Otherwise,  we know that both robots were dictated to move to the other robot's location. Hence, the distance between the robots is either at most $d - \min(\delta, d/2)$, or is greater than $d - \min(\delta, d/2)$ (but still at most $d$). In the former case, the lemma is proved. In the latter case the order of the robots changes (a left robot becomes a right robot, and vice versa). This happens regardless of their coordinates system (maybe both robots view themselves on the right, then they both view themselves on the left). Also, the level of each robot is either the same, or is increased by one. In all cases, both robots are dictated to move to the middle.
In more detail, a right robot with level $i\equiv 1\mod 4$ becomes a left robot with either the same level of level $i+1\equiv 2\mod 4$. In both cases, in the next round, the robot is dictated to move to the middle.
A left robot with level $i\equiv 3\mod 4$ becomes a right robot with either the same level of level $i+1\equiv 0\mod 4$. In both cases, in the next round, the robot is dictated to move to the middle.

So that after one more round, the distance decreases by at least $\min(\delta, d/2)$.
\end{proof}
}

From the previous Lemma, we know there exists a round after which the robots are (and remain) at distance at most $\delta$, so it is enough to prove the correctness of our algorithm assuming movements are rigid (robots always reach their target destination).

If, at round $t$, one robot sees itself on the right, and the other sees itself on the left, then they agree on the orientation of the line at time $t$. Since, for each robot, the orientation of the line does not change, then they agree on it during the whole execution (except when they are gathered, as the line is not defined in that case).

Similarly, if at some round, both robots see themselves at the right (resp. at the left), then their orientations of the line are opposite, and remain opposite during the whole execution (again, until they gather). Hence we have the following remark.

\begin{remark}\label{rem:orientation common or opposite}
Consider two disoriented robots moving on the line $L$ joining them and executing Algorithm~\ref{algo:disoriented FSYNC}. Then, either they have a common orientation of $L$ during the whole execution (while they are not gathered), or they have opposite orientations of $L$ during the whole execution (while they are not gathered).
\end{remark}

\begin{lemma}\label{lem: proof of algo, correct, common}
Assuming rigid movements, no crash, and robots having \textbf{a common orientation} of the line joining them, then, Algorithm~\ref{algo:disoriented FSYNC} solves the SUIR problem.
\end{lemma}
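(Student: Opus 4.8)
The plan is to track the pair (level, side) of one fixed robot and show it follows a deterministic cycle that leads to gathering within a bounded number of rounds. Since the two robots share a common orientation of the line $L$, one robot $r_\ell$ sees itself on the left and the other $r_r$ sees itself on the right, and this assignment is fixed throughout (Remark~\ref{rem:orientation common or opposite}). Crucially, because the two robots are at the same global distance $d$ from each other, but measure it with possibly different unit distances, their levels $i_\ell$ and $i_r$ may differ; however, I claim $|i_\ell - i_r|$ is a constant $c \geq 0$ for the whole execution, because each time the global distance halves, \emph{both} levels increase by exactly one. So the key invariant is: the two levels differ by a fixed constant, and the side assignment is fixed. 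I would first establish this invariant cleanly, handling the (harmless) boundary ambiguity of when $d$ is exactly a power of $2$ by noting Lemma~\ref{lem:distance decreases by fixed distance} already reduced us to the rigid case, and a round either halves the distance or does something I analyze directly.

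Next I would do the case analysis on $i_\ell \bmod 4$ (equivalently, track $r_r$'s level, which is $i_\ell + c$ for the fixed constant $c$). In each round, I check what Algorithm~\ref{algo:disoriented FSYNC} dictates for each robot given its own level's residue mod $4$ and its fixed side. The four residues for $r_\ell$, combined with the four possible residues for $r_r$ (determined by $c \bmod 4$), give finitely many cases. In every case I verify that either (a) both robots move to the middle, so $d$ halves and both levels go up by one — side assignment is preserved since neither robot crosses the other; or (b) exactly one robot moves to the middle and the other moves to the other robot's position — then I check whether the global distance halves or not. If it halves, both levels go up by one; the mover-to-other may have crossed the midpoint, but since exactly one robot moved toward the other by the full distance while the other moved to the midpoint, after the round the robots have swapped relative order, so the side assignment flips — wait, that would break the invariant, so I need to be careful here. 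Actually the resolution is that the algorithm is designed so that the "move to other / move to middle" cases only arise at odd levels, and after such a round the levels are even, where both move to the middle; I would trace through the example already given in the text (levels $\equiv 1, 3 \bmod 4$) to confirm the cycle $1 \to 2 \to 3 \to 0$ (mod 4) ends in gathering, and show every starting residue falls into this cycle within at most three rounds.

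The main obstacle I expect is the bookkeeping around side flips when a robot executes "move to other": I must confirm that the \emph{combination} of the two robots' prescribed moves is always consistent (never both moving to the other's old position, which would make them swap and re-separate uncontrollably, and never producing a configuration where the invariant on $|i_\ell - i_r|$ is violated). The cleanest way to handle this is to observe that the algorithm at an odd level uses the side to \emph{break the symmetry}: at level $\equiv 1 \bmod 4$ the left robot goes to the middle and the right robot goes to the other; at level $\equiv 3 \bmod 4$ it is reversed. Since both robots have the \emph{same} global distance $d$, the constant offset $c = i_r - i_\ell$ controls everything, and I would enumerate $c \bmod 4 \in \{0,1,2,3\}$ — for $c$ even both robots are at even levels simultaneously or odd levels simultaneously, and for $c$ odd one is even when the other is odd. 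In the "both odd" sub-case the two robots have residues differing by $2$ mod $4$, i.e. one is $\equiv 1$ and the other $\equiv 3$, so one does "left $\to$ middle, right $\to$ other" and the other does "left $\to$ other, right $\to$ middle" — and since the sides are fixed and opposite, exactly one robot moves to the middle and exactly one to the other, $d$ halves, and we proceed. In every case the global distance is nonincreasing and strictly decreases (halves) at least once every three rounds until it is $0$; since it started finite, gathering occurs in finitely many rounds, which proves the lemma.
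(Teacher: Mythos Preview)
Your overall strategy matches the paper's---a finite case analysis on the two levels modulo $4$---but the proposal contains two genuine errors.

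First, your claim that the side assignment ``is fixed throughout'' is false. Remark~\ref{rem:orientation common or opposite} says only that the two robots share a common \emph{orientation} of the line; it does not say a given robot stays on the same side. Whenever one robot executes ``move to other'' while the other executes ``move to middle'', the two robots exchange positions (distance halved), so the robot that saw itself on the left now sees itself on the right and vice versa. You notice this mid-paragraph but never actually resolve it; the later remark that ``sides are fixed and opposite'' repeats the error. Relatedly, the case ``both move to the other's old position'' \emph{does} occur---precisely when the current left robot's level is $\equiv 3$ and the current right robot's level is $\equiv 1 \pmod 4$---and it is not uncontrollable: the robots swap at unchanged distance, landing in the state $(1,3)$, where both are dictated ``move to middle'' and gather on the next round.

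Second, your closing argument is invalid: ``the distance halves at least once every three rounds, so since it started finite, gathering occurs in finitely many rounds.'' Halving a positive real repeatedly never yields zero. Rendezvous occurs exactly in a round where \emph{both} robots execute ``move to middle''; you must exhibit such a round, not merely bound the distance.

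The paper sidesteps both issues by abandoning the attempt to track fixed robots and instead labelling each configuration $(i,j)$, where $i$ is the level of whichever robot is \emph{currently} on the left and $j$ that of the one currently on the right. A swap then sends $(i,j)$ to $(j+1,i+1)$ (one to middle, one to other) or to $(j,i)$ (both to other). Direct enumeration of the $16$ residue pairs $(i \bmod 4,\, j \bmod 4)$ shows every chain reaches a ``both to middle'' round within three steps. Your observation that $|l_A - l_B|$ is constant for fixed robots $A,B$ is correct but becomes unnecessary once you track by current side.
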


\begin{proof}
Since the robots have a common orientation, we know there is one robot that sees itself on the right and one robot that sees itself on the left. Of course, the robots are not aware of this, but we saw in the previous remark that a common orientation is preserved during the whole execution (while robots are not gathered).

Let $(i,j) \in \Z^2$ denote a configuration where the robot on the left is at level $i$, and the robot on the right is at level $j$, and we write $(i,j)\equiv (k,l) \mod 4$ if and only if $i\equiv k \mod 4$ and $j\equiv l \mod 4$.

To prove the lemma we want to show that for any configuration $(i,j) \in \Z^2$, the robots achieve rendezvous. 
Take an arbitrary configuration $(i,j) \in \Z^2$. We consider all 16 cases:

\begin{enumerate}
\item {\small \label{i:com:0-0} \textbf{if} $(i,j)\equiv (0,0) \mod 4$: rendezvous is achieved in one round.}
\item {\small\label{i:com:0-1} \textbf{if} $(i,j)\equiv (0,1) \mod 4$:  we reach configuration $(j+1, i+1) \equiv (2,1) \mod 4$}
\item {\small\label{i:com:0-2} \textbf{if} $(i,j)\equiv (0,2) \mod 4$: rendezvous is achieved in one round.}
\item {\small\label{i:com:0-3} \textbf{if} $(i,j)\equiv (0,3) \mod 4$: rendezvous is achieved in one round.}
\item {\small\label{i:com:1-0} \textbf{if} $(i,j)\equiv (1,0) \mod 4$: rendezvous is achieved in one round.}
\item {\small\label{i:com:1-1} \textbf{if} $(i,j)\equiv (1,1) \mod 4$:  we reach configuration $(j+1, i+1) \equiv (2,2) \mod 4$}
\item {\small\label{i:com:1-2} \textbf{if} $(i,j)\equiv (1,2) \mod 4$: rendezvous is achieved in one round.}
\item {\small\label{i:com:1-3} \textbf{if} $(i,j)\equiv (1,3) \mod 4$: rendezvous is achieved in one round.}
\item {\small\label{i:com:2-0} \textbf{if} $(i,j)\equiv (2,0) \mod 4$: rendezvous is achieved in one round.}
\item {\small\label{i:com:2-1} \textbf{if} $(i,j)\equiv (2,1) \mod 4$:  we reach configuration $(j+1, i+1) \equiv (2,3) \mod 4$}
\item {\small\label{i:com:2-2} \textbf{if} $(i,j)\equiv (2,2) \mod 4$: rendezvous is achieved in one round.}
\item {\small\label{i:com:2-3} \textbf{if} $(i,j)\equiv (2,3) \mod 4$: rendezvous is achieved in one round.}
\item {\small\label{i:com:3-0} \textbf{if} $(i,j)\equiv (3,0) \mod 4$:  we reach configuration $(j+1, i+1) \equiv (1,0) \mod 4$}
\item {\small\label{i:com:3-1} \textbf{if} $(i,j)\equiv (3,1) \mod 4$:  we reach configuration $(j, i) \equiv (1,3) \mod 4$}
\item {\small\label{i:com:3-2} \textbf{if} $(i,j)\equiv (3,2) \mod 4$:  we reach configuration $(j+1, i+1) \equiv (3,0) \mod 4$}
\item {\small\label{i:com:3-3} \textbf{if} $(i,j)\equiv (3,3) \mod 4$:  we reach configuration $(j+1, i+1) \equiv (0,0) \mod 4$}
\end{enumerate}

In any case, rendezvous is achieved after at most three rounds.
\end{proof}

\begin{lemma}\label{lem: proof of algo, correct, opposite}
Assuming rigid movement, no crash, and robots having \textbf{opposite orientations} of the line joining them, then, Algorithm~\ref{algo:disoriented FSYNC} solves the SUIR problem.
\end{lemma}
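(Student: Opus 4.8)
The plan is to mirror the case analysis of Lemma~\ref{lem: proof of algo, correct, common}, but to account for the fact that, with opposite orientations, both robots perceive themselves on the same side of the line (both on the left, or both on the right), and this shared perception is preserved throughout the execution by Remark~\ref{rem:orientation common or opposite}. So first I would set up notation: let $(i,j)\in\Z^2$ now denote a configuration where \emph{both} robots see the other on (say) their right, $i$ being the level of the first robot and $j$ the level of the second; the symmetric sub-case where both see the other on their left is handled identically by swapping the roles of the two robots, so it suffices to treat one of them. As in the previous lemma, thanks to Lemma~\ref{lem:distance decreases by fixed distance} we may assume movements are rigid, so after one round in which both robots reach their targets the distance is exactly halved whenever at least one robot moves to the middle, hence each level increases by exactly one; and when both robots move to the other's position the distance is unchanged but the two robots swap positions on the line (so "both see the other on the right" may flip to "both see the other on the left", which is fine, and each level either stays the same or increases by one).

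Then I would run through the finitely many residue classes of $(i\bmod 4, j\bmod 4)$. The key observation driving the bookkeeping: if $i\equiv 0\bmod 2$ the first robot moves to the middle, and likewise for the second robot with $j$; if instead $i$ is odd, then with the "both on the right" convention the first robot moves to the other robot when $i\equiv 1\bmod 4$ and to the middle when $i\equiv 3\bmod 4$ (consulting Algorithm~\ref{algo:disoriented FSYNC}, recalling that "right" triggers the second branch of each conditional). So rendezvous in one round happens exactly when both robots move to the middle, i.e. when neither $i$ nor $j$ is $\equiv 1\bmod 4$ — and even the mixed case "one moves to the middle, the other moves to the other robot's position" with rigid movements lands both robots on the same point, hence also achieves rendezvous in one round. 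The only genuinely non-terminating-in-one-round class is when both robots move to the other's position, which forces $i\equiv j\equiv 1\bmod 4$; then the distance is preserved, the robots swap sides (now both see the other on the left), and the levels become $i'\in\{i,i+1\}$, $j'\in\{j,j+1\}$, i.e. each of $i',j'$ is $\equiv 1$ or $\equiv 2\bmod 4$. In the "both on the left" regime, by symmetry of the algorithm, the roles of the conditionals flip, so the dangerous class there is $i\equiv j\equiv 3\bmod 4$. I would then chase these residues: starting from $(1,1)\bmod 4$ (both on right) we go to some $(i',j')$ with $i',j'\in\{1,2\}\bmod 4$ and both-on-left; from there, since on the left a level $\equiv 1\bmod 4$ robot moves to the middle, at least one robot moves to the middle, so the distance strictly decreases and levels increase — after at most one or two more rounds we escape to a configuration where rendezvous is forced in one round. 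The point is that the "bad" residue set is not closed under the transition, and the level (hence the mod-4 residue) increments monotonically, so after $O(1)$ rounds we reach a terminating class.

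The main obstacle, as with the common-orientation case, is purely a matter of careful bookkeeping: one must track simultaneously (a) which side each robot perceives the other on — which can flip on a "move to other" round — and (b) how each level changes (stay or $+1$), and make sure that in every one of the sixteen residue classes the resulting configuration is either terminal or strictly closer to terminal in a measure that cannot decrease forever. Because the level is a $\Z$-valued quantity that only ever increases (the distance only ever decreases, and strictly so outside the "both move to other" deadlock, which itself cannot persist), there is no infinite regress; I would package this as: in each class, either rendezvous in one round, or transition to a new class with strictly larger minimum level, and since within four consecutive levels some round must hit a terminating class, rendezvous is achieved in $O(1)$ rounds. Once rigid-movement correctness is established for both orientation regimes (this lemma plus Lemma~\ref{lem: proof of algo, correct, common}), combining with Remark~\ref{rem:orientation common or opposite} and Lemma~\ref{lem:distance decreases by fixed distance} gives full correctness of Algorithm~\ref{algo:disoriented FSYNC} in the non-rigid, crash-free case; the crash case is then subsumed because a crashed robot is simply one that never moves, which the $f(d,x,y)=|d-x-y|$ analysis already tolerates (one of $x,y$ being $0$).
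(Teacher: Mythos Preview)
Your overall plan---case analysis on the residues of the two levels modulo $4$, tracking whether both robots currently see themselves on the left or on the right---is exactly the paper's approach. However, two of your shortcuts are genuinely wrong and leave the argument incomplete.

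\textbf{The left/right cases are not interchangeable by ``swapping roles.''} Algorithm~\ref{algo:disoriented FSYNC} is \emph{not} symmetric in the left/right perception: at level $i\equiv 1\bmod 4$ a left robot moves to the middle while a right robot moves to the other, and at level $i\equiv 3\bmod 4$ the roles are reversed. Concretely, $R\{1,1\}$ has both robots moving to the other's position (not gathered), whereas $L\{1,1\}$ has both moving to the middle (gathered in one round). So you cannot deduce the $L$ analysis from the $R$ analysis by any relabeling of the two robots; the paper treats ten $R$-cases and ten $L$-cases separately. (There is a hidden symmetry $L\{i,j\}\leftrightarrow R\{i+2,j+2\}$ coming from the fact that the left action at level $i$ equals the right action at level $i+2$, but you do not invoke it, and ``swapping roles'' does not capture it.)

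\textbf{The mixed case does not achieve rendezvous.} Your claim that ``one moves to the middle, the other moves to the other robot's position \ldots\ lands both robots on the same point'' is false. If $A$ is at $0$ and $B$ at $d$, $A$ moves to the midpoint $d/2$ while $B$ moves to $A$'s \emph{original} position $0$; they end at distance $d/2$, not gathered. Consequently your identification of ``the only genuinely non-terminating-in-one-round class'' as $i\equiv j\equiv 1\bmod 4$ (in the $R$ regime) is wrong: in the $R$ regime every class with at least one level $\equiv 1\bmod 4$ fails to terminate in one round, namely $R\{0,1\}$, $R\{1,1\}$, $R\{1,2\}$, $R\{1,3\}$; dually, in the $L$ regime the non-terminal classes are those with at least one level $\equiv 3\bmod 4$. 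The paper chases each of these to termination (e.g.\ $R\{0,1\}\to L\{1,2\}\to$ done; $R\{1,2\}\to L\{2,3\}\to R\{0,3\}\to$ done), and the longest chain takes three rounds. Your high-level remark that ``the bad residue set is not closed under the transition'' is the right intuition, but the concrete bookkeeping you supply does not cover the actual non-terminal classes.
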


\begin{proof}
Since the robots have opposite orientations, we know they either both see themselves on the right or they both see themselves on the left. Of course, the robots are not aware of this, but we saw in the previous remark that the opposite orientations are preserved during the whole execution (while robots are not gathered).

In this proof, $R\{i,j\}$ denotes a configuration where both robots see themselves on the right and one of them has level $i$ and the other as level $j$. Here, the order between $i$ and $j$ does not matter (hence the set notation). Similarly $L\{i,j\}$ denotes a configuration where both robots see themselves on the left, and one of them has level $i$ and the other as level $j$. 

Here, assuming without loss of generality that $i\leq j \mod 4$, we write $R\{i,j\}\equiv (k,l) \mod 4$, resp. $L\{i,j\}\equiv (k,l) \mod 4$, if and only if, $i\equiv k \mod 4$ and $j\equiv l \mod 4$.

To prove the lemma, we want to show that for any configuration $R\{i,j\}$ or $L\{i,j\}$, the robots achieve rendezvous.
Take an arbitrary configuration $(i,j) \in \Z^2$. We consider all 20 cases:

\begin{enumerate}
\item {\small \label{i:opp:L0-0} \textbf{if} $L\{i,j\}\equiv (0,0) \mod 4$: rendezvous is achieved in one round.}
\item {\small\label{i:opp:L0-1} \textbf{if} $L\{i,j\}\equiv (0,1) \mod 4$: rendezvous is achieved in one round.}
\item {\small\label{i:opp:L0-2} \textbf{if} $L\{i,j\}\equiv (0,2) \mod 4$: rendezvous is achieved in one round.}
\item {\small \label{i:opp:L0-3} \textbf{if} $L\{i,j\}\equiv (0,3) \mod 4$: we reach configuration $R\{i{+}1, j{+}1\}{\equiv} (0,1) \mod 4$.}

\item {\small\label{i:opp:L1-1} \textbf{if} $L\{i,j\}\equiv (1,1) \mod 4$: rendezvous is achieved in one round.}
\item{\small \label{i:opp:L1-2} \textbf{if} $L\{i,j\}\equiv (1,2) \mod 4$: rendezvous is achieved in one round.}
\item{\small \label{i:opp:L1-3} \textbf{if} $L\{i,j\}\equiv (1,3) \mod 4$: we reach configuration $R\{i{+}1, j{+}1\}{\equiv} (0,2) \mod 4$.}
\item{\small \label{i:opp:L2-2} \textbf{if} $L\{i,j\}\equiv (2,2) \mod 4$: rendezvous is achieved in one round.}
\item{\small \label{i:opp:L2-3} \textbf{if} $L\{i,j\}\equiv (2,3) \mod 4$: we reach configuration $R\{i{+}1, j{+}1\}{\equiv} (0,3) \mod 4$.}
\item{\small \label{i:opp:L3-3} \textbf{if} $L\{i,j\}\equiv (3,3) \mod 4$: we reach configuration $R\{i, j\}{\equiv} (3,3) \mod 4$.}

\item {\small\label{i:opp:R0-0} \textbf{if} $R\{i,j\}\equiv (0,0) \mod 4$: rendezvous is achieved in one round.}
\item {\small \label{i:opp:R0-1} \textbf{if} $R\{i,j\}\equiv (0,1) \mod 4$: we reach configuration $L\{i{+}1, j{+}1\}{\equiv} (1,2) \mod 4$.}
\item {\small\label{i:opp:R0-2} \textbf{if} $R\{i,j\}\equiv (0,2) \mod 4$: rendezvous is achieved in one round.}
\item{\small \label{i:opp:R0-3} \textbf{if} $R\{i,j\}\equiv (0,3) \mod 4$: rendezvous is achieved in one round.}
\item {\small\label{i:opp:R1-1} \textbf{if} $R\{i,j\}\equiv (1,1) \mod 4$: we reach configuration $L\{i, j\}{\equiv} (1,1) \mod 4$.}
\item {\small\label{i:opp:R1-2} \textbf{if} $R\{i,j\}\equiv (1,2) \mod 4$: we reach configuration $L\{i{+}1, j{+}1\}{\equiv} (2,3) \mod 4$.}
\item {\small\label{i:opp:R1-3} \textbf{if} $R\{i,j\}\equiv (1,3) \mod 4$: we reach configuration $L\{i{+}1, j{+}1\}{\equiv} (0,2) \mod 4$.}
\item {\small\label{i:opp:R2-2} \textbf{if} $R\{i,j\}\equiv (2,2) \mod 4$: rendezvous is achieved in one round.}
\item {\small\label{i:opp:R2-3} \textbf{if} $R\{i,j\}\equiv (2,3) \mod 4$: rendezvous is achieved in one round.}
\item {\small\label{i:opp:R3-3} \textbf{if} $R\{i,j\}\equiv (3,3) \mod 4$: rendezvous is achieved in one round.}
\end{enumerate}

In any case, rendezvous is achieved after at most three rounds.
\end{proof}

\begin{lemma}\label{lem: proof of algo, rigid, crash}
Assuming rigid movement and one robot crash, Algorithm~\ref{algo:disoriented FSYNC} solves the SUIR problem.
\end{lemma}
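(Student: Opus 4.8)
The plan is to leverage the case analyses already carried out for the crash-free setting (Lemmas~\ref{lem: proof of algo, correct, common} and~\ref{lem: proof of algo, correct, opposite}) and observe what changes when exactly one robot crashes. The key structural fact is this: when a robot crashes, the crashed robot no longer moves, so from the moving robot's perspective the configuration is still a distinct two-robot configuration on a fixed line, and the moving robot still sees a well-defined level and a well-defined side (left or right). Since movements are assumed rigid in this lemma, the live robot always reaches its computed target. So I would first argue that the distance between the two robots is non-increasing, and more precisely that it either stays the same or is divided by two in each round, because the only moves available are ``move to the middle'' (distance halves) and ``move to the other'' (distance drops to zero, i.e.\ rendezvous).

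Next I would track the level of the live robot. Let $d$ be the distance, and $i$ the level, so $d\in[2^{-i},2^{1-i})$. If the live robot is ever dictated to ``move to the other,'' rendezvous is immediately achieved (rigid movement, crashed robot stationary), and we are done. Otherwise it is dictated to ``move to the middle,'' the distance halves, and the level increases by exactly one: $d/2\in[2^{-(i+1)},2^{-i})$. Crucially, moving to the middle does \emph{not} change which side the live robot sees itself on (it stays on the same side of the stationary crashed robot). So I would examine, for a live robot on a fixed side, the sequence of levels $i, i+1, i+2, \ldots$ modulo $4$, and check that within any four consecutive values at least one triggers ``move to the other.'' Concretely: a live robot that always stays on the \emph{right} executes ``move to the other'' when $i\equiv 1\bmod 4$ (case $i\equiv 1\bmod 4$ of Algorithm~\ref{algo:disoriented FSYNC}); a live robot that always stays on the \emph{left} executes ``move to the other'' when $i\equiv 3\bmod 4$. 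In either case, among any four successive levels exactly one has the required residue, so the live robot is dictated to move to the crashed robot's position within at most four rounds, at which point rendezvous is achieved.

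I would also dispose of the degenerate sub-case where the live robot's level happens to already be $\equiv 0\bmod 2$: then it moves to the middle, halving the distance, incrementing the level, and within a bounded number of further halvings it lands on the ``move to other'' residue for its (fixed) side. Putting this together: from an arbitrary configuration with one crash, after at most a constant number $c$ (here $c\le 4$) of rounds the live robot is dictated to go to the crashed robot, and by rigidity it arrives, so $C_{t'}=\{p\}$ for all subsequent $t'$ where $p$ is the crash location. This matches the SUIG/SUIR requirement (the crashed robot stays at $p$, which is exactly the desired rendezvous point).

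The main obstacle — really the only subtlety — is making rigorous the claim that ``move to the middle'' preserves the side seen by the live robot and increases its level by exactly one, and hence that the level sequence is genuinely $i, i+1, i+2, \ldots$ so that a residue class modulo $4$ is hit within four steps. One must be careful that the crashed robot is truly immobile (guaranteed by the crash model and the standing assumption that the crash occurs at the start), that halving is exact in \emph{every} coordinate system (it is, since ``the middle'' and the ratio $1/2$ are coordinate-independent), and that the interval convention $d\in[2^{-i},2^{1-i})$ makes the level increment well-defined at the boundary — when $d$ exactly equals a power of two the level is unambiguous by the half-open convention, and after halving it remains so. Once these observations are nailed down, the proof is a short finite check mirroring the crash-free case analyses, restricted to the strictly simpler situation where one of the two robots never moves.
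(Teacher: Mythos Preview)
Your proof is correct and follows essentially the same approach as the paper: fix the side of the live robot, observe that ``move to middle'' increments the level by exactly one while preserving that side, and note that within four consecutive levels the residue $1\bmod 4$ (right) or $3\bmod 4$ (left) forces ``move to other,'' achieving rendezvous. You are more explicit than the paper about the subtleties (side preservation under halving, the half-open level convention), but the argument is the same.
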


\begin{proof}
Let $i$ be the level of the correct robot $r$. Assume the other robot crashes. Robot $r$ either sees itself on the right or on the left of the other robot.

If $r$ sees itself on the right, then depending on its level, either $r$ moves to the middle, or moves to the other robot. In the former case, the level of $r$ increases by one and $r$ continues to see itself on the right. In the latter case, the rendezvous is achieved in one round. After at most three rounds, the level of $r$ is congruent to 1 modulo 4 so that after at most four rounds the rendezvous is achieved.

Similarly, if $r$ sees itself on the left, then after at most four rounds, $r$'s level is congruent to 3 modulo 4 and the rendezvous is achieved.
\end{proof}

\begin{theorem}
Algorithm~\ref{algo:disoriented FSYNC} solves the SUIR problem with disoriented robots in FSYNC.
\end{theorem}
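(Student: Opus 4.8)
The plan is to reduce, in two stages, to situations already handled by the preceding lemmas. The first stage removes the non-rigidity of movements; the second stage is a case analysis on whether a crash occurs and, if not, on the relative orientations of the two robots.

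For the first stage, recall that Lemma~\ref{lem:distance decreases by fixed distance} guarantees that from any configuration with inter-robot distance $d$ (measured in $Z$), two rounds later the distance has dropped by at least $\min(\delta, d/2)$; moreover, since both candidate destinations (the midpoint and the other robot's position) lie inside the segment joining the two robots, a single round never increases the distance. Consequently, as long as $d > \delta$ we have $\min(\delta, d/2) \geq \delta/2$, so the distance decreases by at least $\delta/2$ every two rounds, and after at most $O(d_0/\delta)$ rounds (where $d_0$ is the initial distance) we reach, and thereafter remain in, a configuration with $d \leq \delta$. From that point on, every destination computed by Algorithm~\ref{algo:disoriented FSYNC} is at distance at most $d \leq \delta$ from the robot computing it (the midpoint at distance $d/2$, the other robot at distance $d$), so by the model the robot always reaches its target: movements are effectively rigid for the remainder of the execution. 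Since the robots are oblivious, we may therefore assume without loss of generality that movements are rigid throughout.

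For the second stage, suppose first that no robot crashes. By Remark~\ref{rem:orientation common or opposite}, the two robots either keep a common orientation of the line joining them for the whole pre-gathering execution, or keep opposite orientations. In the former case Lemma~\ref{lem: proof of algo, correct, common} applies, in the latter Lemma~\ref{lem: proof of algo, correct, opposite} applies; either way rendezvous is achieved. If instead one robot crashes, Lemma~\ref{lem: proof of algo, rigid, crash} applies directly and the correct robot joins the crashed one. In all cases the two robots gather in finitely many rounds, which is exactly the SUIR specification. Finally, since Algorithm~\ref{algo:disoriented FSYNC} is stated on the line, Theorem~\ref{thm:1D to 2D} lifts it to a two-dimensional algorithm solving SUIR, completing the argument.

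The only genuinely new work is the first stage, and its delicate point is the interaction between non-rigidity, the fact that a robot's level is defined through its \emph{own} unit distance, and the possibility of a crash. I would need to double-check that Lemma~\ref{lem:distance decreases by fixed distance} indeed covers the crashed case (it does: the single correct robot still travels at least $\min(\delta, d/2)$ toward the immobile one, which forces the ``former case'' of that lemma's proof), and that ``distance never increases'' holds even when the two robots swap sides of each other. Everything after the reduction to rigid movements is pure bookkeeping already carried out in the preceding lemmas, so the theorem's proof itself needs no further case analysis.
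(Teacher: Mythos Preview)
Your proof is correct and follows essentially the same two-stage structure as the paper's own argument: first invoke Lemma~\ref{lem:distance decreases by fixed distance} to reduce to rigid movements, then branch on crash versus no-crash and, in the latter case, on common versus opposite orientation via Remark~\ref{rem:orientation common or opposite} and Lemmas~\ref{lem: proof of algo, correct, common}--\ref{lem: proof of algo, rigid, crash}. Your added details (the explicit $O(d_0/\delta)$ bound, the observation that the distance never increases in a single round, and the closing appeal to Theorem~\ref{thm:1D to 2D} for the two-dimensional lift) are all sound and go slightly beyond what the paper spells out, but they do not change the approach.
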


\begin{proof}
By Lemma~\ref{lem:distance decreases by fixed distance}, the distance between the two robots decreases by at least $\min(\delta, d/2)$ every two rounds. Hence, eventually, robots are at distance smaller than $\delta$ from one another and, from this point in time, movements are rigid. Assume now that movements are rigid.
If a robot crashes, then the rendezvous is achieved by using Lemma~\ref{lem: proof of algo, rigid, crash}. Otherwise, depending on whether the robots have a common orientation or opposite orientation of the line joining them (see Remark~\ref{rem:orientation common or opposite}), the Theorem follows either by using Lemma~\ref{lem: proof of algo, correct, common} or by using Lemma~\ref{lem: proof of algo, correct, opposite}.
\end{proof}

\section{The Stand-Up Indulgent Gathering}\label{sec:gethering}

Now that we introduced the notion of levels through the case of the SUIR problem, let us focus on the stand up indulgent gathering problem where $k \geq 2$. One might think that it is sufficient to make the good robots gather in one location and then apply Algorithm \ref{algo:disoriented FSYNC} to perform the gathering. But this is not sufficient. Indeed, as the good robots might have different levels and orientations, by executing Algorithm \ref{algo:disoriented FSYNC}, they might be split into several locations. Hence, Algorithm \ref{algo:disoriented FSYNC} cannot be applied directly.  

\label{sec:algo}

\newcommand{\move}[1]{\texttt{Move}\ensuremath{\left(#1\right)}}
\newcommand{\PhaseA}[1]{\ensuremath{\mathcal{A}^{#1}}}
\newcommand{\PhaseB}[1]{\ensuremath{\mathcal{B}^{#1}}}
\newcommand{\PhaseC}[1]{\ensuremath{\mathcal{C}^{#1}}}
\newcommand{\PhaseText}{\ensuremath{\mathcal{T}_{ext}}}
\newcommand{\PhaseTmiddle}{\ensuremath{\mathcal{T}_{\mathit{middle}}}}
\newcommand{\PhaseSEC}{\ensuremath{\mathcal{T}_{\mathit{SEC}}}}

\newcommand{\Conf}{\ensuremath{\mathbf{\mathit{Conf}}}}

\subsection{Algorithm Description}

We propose in the following an algorithm that solves the SUIG problem. The moves of our algorithm depend on how many occupied locations (also called \emph{points} in the sequel) are seen by a robot and whether they are aligned or not. If two points are seen, we further divide the set of configurations depending on the distance between those points. More precisely, we partition the set $\Conf$ as follows:

\[
\Conf = \PhaseSEC \cup \PhaseText \cup \bigcup_{k\geq 0}\left(\PhaseA{k}\cup \PhaseB{k}\cup \PhaseC{k}_1\cup \PhaseC{k}_2\cup \PhaseC{k}_3 \right)
\]
\begin{itemize}
    \item $\PhaseText(p)$ (or simply $\PhaseText$ when $p$ is arbitrary) denotes the set of configurations where all robots are aligned and there is a single extremal robot $p$ at distance $e$ from the closest robot, with $\frac{e}{d}\in \left\{ \frac{1}{9}, \frac{1}{8}, \frac{9}{80}, \frac{10}{81}, \frac{10}{18}, \frac{9}{16}, \frac{80}{81} \right\}$, where $d$ denotes the distance between the two extremal robots\footnote{The fractions are a consequence of two moves defined in the algorithm ($\move{\frac{1}{9}}$ and  $\move{\frac{1}{8}}$). Other fractions (for the moves and thus in $\PhaseText$) could have been used, as long as the different combinations of moves can be distinguished in the obtained configuration.}. 
    If two such points exist, the configuration is not in $\PhaseText(p)$. 
\item $\PhaseSEC$ denotes the set of configurations that do not consist of only two points, and that are not in $\PhaseText$.
    \item The other sets of configurations form a partition of the set $\Conf_2$ of configurations consisting in only two points. When a configuration consists in only two points, we define the \emph{level} of a robot $r$ as the number $l_r$ such that $d_r \in [2^{-l_r}, 2^{-l_r+1})$, where $d_r$ is the distance between the two points seen by robot $r$. Then we define for all $k\geq 0$ 
    
    \[
    \begin{array}{ll}
       \PhaseA{k} &= \{C \in \Conf_2 \;|\; S_k\phantom{+2k+1}\; \leq l_r < S_k + k \}\\
       \PhaseB{k} &= \{C \in \Conf_2 \;|\; S_k+\phantom{2}k\phantom{+1}\; \leq l_r < S_k + 2k \}\\
       \PhaseC{k}_1 &= \{C \in \Conf_2 \;|\; S_k+2k\phantom{+1}\; \leq l_r < S_k + 2k+1 \} \\ 
       \PhaseC{k}_2 &= \{C \in \Conf_2 \;|\; S_k+2k+1 \leq l_r < S_k + 2k+2 \}\\
       \PhaseC{k}_3 &= \{C \in \Conf_2 \;|\;  S_k+2k+2 \leq l_r < S_k + 2k+3 \}
    \end{array}
    \]
    Where $S_k = k(k+2)$, so that $S_{k+1} = S_k + 2k + 3$ and all the levels $\geq 0$ are considered. Also, for simplicity, instead of being the empty set, we fix by convention $\PhaseA{0} = \{C \in \Conf_2 \;|\; l_r < 1 \}$. Doing so, the infinite sequence $\left(\PhaseA{k}\cup \PhaseB{k}\cup \PhaseC{k}_1\cup \PhaseC{k}_2\cup \PhaseC{k}_3\right)_{k\geq 0}$ is a partition of $\Conf_2$. One can notice that the deduced partition depends on the observing robot $r$, \ie, two different robots may see a configuration in $\Conf_2$ in different subsets at the same time, since they may have different distance units.
\end{itemize}

When the configuration consists of two points at distance $d$, we say a robot performs move $\move{e}$ if it is ordered to move toward the other point for distance $e \times d$.
We are now ready to describe our algorithm called SUIG for Stand-up Indulgent Gathering. The pseudo-code is given in Algorithm~\ref{algo:SUIG}, and a visual representation is given in Figure~\ref{fig:algo 1}.
When robots occupy only two points (\ie, when the configuration is in $\Conf_2$), a robot's action depends on its own coordinate system. More precisely, a robot $r$ orients the line passing through both occupied points using its coordinate system. If the other point (where the robot is not located) is located on the East or on the North (in case no point is at the East of the other), then $r$ considers it is on the left of the other point, otherwise $r$ sees itself on the right of the other point. In the illustration, we orient the line with an arrow, and present the moves using bent arrows, depending on where the robot sees itself, on the right or on the left. In the pseudo-code, we write

\[
\mathit{left}\rightarrow \texttt{Move1} ; \mathit{right}\rightarrow \texttt{Move2}
\] 
to notify that a robot that sees itself on the left, resp. on the right, should execute \texttt{Move1}, resp. \texttt{Move2}.

\vspace{0.6cm}
\begin{algorithm}[H]
\lIf{$C \in \PhaseA{k}$}{
      \move{\frac{1}{2}}
}
\lIf{$C \in \PhaseB{k}$}{
       \emph{left} $\rightarrow$ \move{\frac{1}{9}} ;
       \emph{right} $\rightarrow$ \move{\frac{1}{10}} 
}
\lIf{$C \in \PhaseC{k}_1$}{
       \move{\frac{1}{2}}
}
\lIf{$C \in \PhaseC{k}_2$}{
       \emph{left} $\rightarrow$ \move{1} ;
       \emph{right} $\rightarrow$ \move{\frac{1}{2}} 
}
\lIf{$C \in \PhaseC{k}_3$}{
       \emph{left} $\rightarrow$ \move{\frac{1}{2}} ;
       \emph{right} $\rightarrow$ \move{1} 
}
\lIf{$C \in \PhaseText(p)$}{
    Move to $p$
}
\lElse{
    Move to the center of the smallest enclosing circle.
}
 \caption{SUIG Algorithm}\label{algo:SUIG}
\end{algorithm}

\begin{figure}[h]
    \centering
    \begin{tabularx}{\textwidth}{lc}
       Phase \PhaseA{k}: & 
\multicolumn{1}{m{8cm}}{
\begin{tikzpicture}
\draw[->] (-1,0) -- (3,0);
\node[robot] (a) at (0,0) {};
\node[robot] (b) at (2,0) {};
\node[dest] (m) at (1,0) {};

\path (a) edge[bend left=60,->,>=stealth'] node [left] {} (m);
\path (b) edge[bend right=60,->,>=stealth'] node [left] {} (m);

\draw (0.5-0.06,-0.1) -- (0.5+0.06,0.1);
\draw (1.5-0.06,-0.1) -- (1.5+0.06,0.1);
\end{tikzpicture}
}
        \\
\makecell[l]{Phase \PhaseB{k}:\\
the left robot performs $\move{\frac{1}{9}}$\\
the right robot performs $\move{\frac{1}{10}}$}
& 
\multicolumn{1}{m{8cm}}{
\begin{tikzpicture}
\draw[->] (-1,0) -- (3,0);
\node[robot] (a) at (0,0) {};
\node[robot] (b) at (2,0) {};
\node[dest] (m1) at (2/10,0) {};
\node[dest] (m2) at (2-2/6,0) {};

\path (a) edge[bend left=80,->,>=stealth'] node [left] {} (m1);
\path (b) edge[bend right=80,->,>=stealth'] node [left] {} (m2);
\end{tikzpicture}
}
\\
Phase $\PhaseC{k}_1$: &
\multicolumn{1}{m{8cm}}{
\begin{tikzpicture}
\draw[->] (-1,0) -- (3,0);
\node[robot] (a) at (0,0) {};
\node[robot] (b) at (2,0) {};
\node[dest] (m) at (1,0) {};

\path (a) edge[bend left=60,->,>=stealth'] node [left] {} (m);
\path (b) edge[bend right=-60,->,>=stealth'] node [left] {} (m);

\draw (0.5-0.06,-0.1) -- (0.5+0.06,0.1);
\draw (1.5-0.06,-0.1) -- (1.5+0.06,0.1);
\end{tikzpicture}
}
\\
Phase $\PhaseC{k}_2$: &
\multicolumn{1}{m{8cm}}{
\begin{tikzpicture}
\draw[->] (-1,0) -- (3,0);
\node[robot] (a) at (0,0) {};
\node[robot] (b) at (2,0) {};
\node[dest] (m) at (1,0) {};

\path (a) edge[bend left=60,->,>=stealth'] node [left] {} (b);
\path (b) edge[bend right=-60,->,>=stealth'] node [left] {} (m);

\draw (0.5-0.06,-0.1) -- (0.5+0.06,0.1);
\draw (1.5-0.06,-0.1) -- (1.5+0.06,0.1);
\end{tikzpicture}
}
\\
Phase $\PhaseC{k}_3$: &
\multicolumn{1}{m{8cm}}{
\begin{tikzpicture}
\draw[->] (-1,0) -- (3,0);
\node[robot] (a) at (0,0) {};
\node[robot] (b) at (2,0) {};
\node[dest] (m) at (1,0) {};

\path (a) edge[bend left=-60,->,>=stealth'] node [left] {} (m);
\path (b) edge[bend right=60,->,>=stealth'] node [left] {} (a);

\draw (0.5-0.06,-0.1) -- (0.5+0.06,0.1);
\draw (1.5-0.06,-0.1) -- (1.5+0.06,0.1);
\end{tikzpicture}
}\\
\makecell[l]{Phase $\PhaseText(p)$\\
$\frac{e}{d}\in \left\{ \frac{1}{9}, \frac{1}{8}, \frac{9}{80}, \frac{10}{81}, \frac{10}{18}, \frac{9}{16}, \frac{80}{81} \right\}$
}&
\multicolumn{1}{m{8cm}}{
\begin{tikzpicture}
\draw[] (-1,0) -- (5,0);
\node[robot] (a2) at (0,0) {};
\node[robot] (r1) at (1,0) {};
\node[robot] (r2) at (1.7,0) {};
\node[robot] (r3) at (0.3,0) {};
\node[robot] (b2) at (2.6,0) {};
\node[robot,label={45:$p$}] (c2) at (4,0) {};

\draw[<->] (0,-0.5) -- (4,-0.5);
\node[] (e) at (2,-0.7) {$d$};

\path (a2) edge[bend left=40,->,>=stealth'] node [left] {} (c2);
\path (b2) edge[bend left=40,->,>=stealth'] node [left] {} (c2);
\path (r1) edge[bend left=40,->,>=stealth'] node [left] {} (c2);
\path (r2) edge[bend left=40,->,>=stealth'] node [left] {} (c2);
\path (r3) edge[bend left=40,->,>=stealth'] node [left] {} (c2);

\node[] (e) at (3.2,-0.2) {$e$};
\end{tikzpicture}}\\
Phase $\PhaseSEC$: & 
\multicolumn{1}{m{7cm}}{
\begin{tikzpicture}
\filldraw[color=black!60, fill=white, dashed](0,0) circle (1.3);
\node[robot] (a) at (-1.3,0) {};
\node[robot] (b) at (20:1.3) {};
\node[robot] (c) at (-40:1.3) {};
\node[robot] (r) at (140:0.8) {};
\node[robot] (r2) at (-100:1) {};
\node[dest] (d) at (0,0) {};

\path (a) edge[->,>=stealth'] node [left] {} (d);
\path (b) edge[->,>=stealth'] node [left] {} (d);
\path (c) edge[->,>=stealth'] node [left] {} (d);
\path (r) edge[->,>=stealth'] node [left] {} (d);
\path (r2) edge[->,>=stealth'] node [left] {} (d);
\end{tikzpicture} 
}
\end{tabularx}

    \caption{Algorithm \ref{algo:SUIG}.}
    \label{fig:algo 1}
\end{figure}
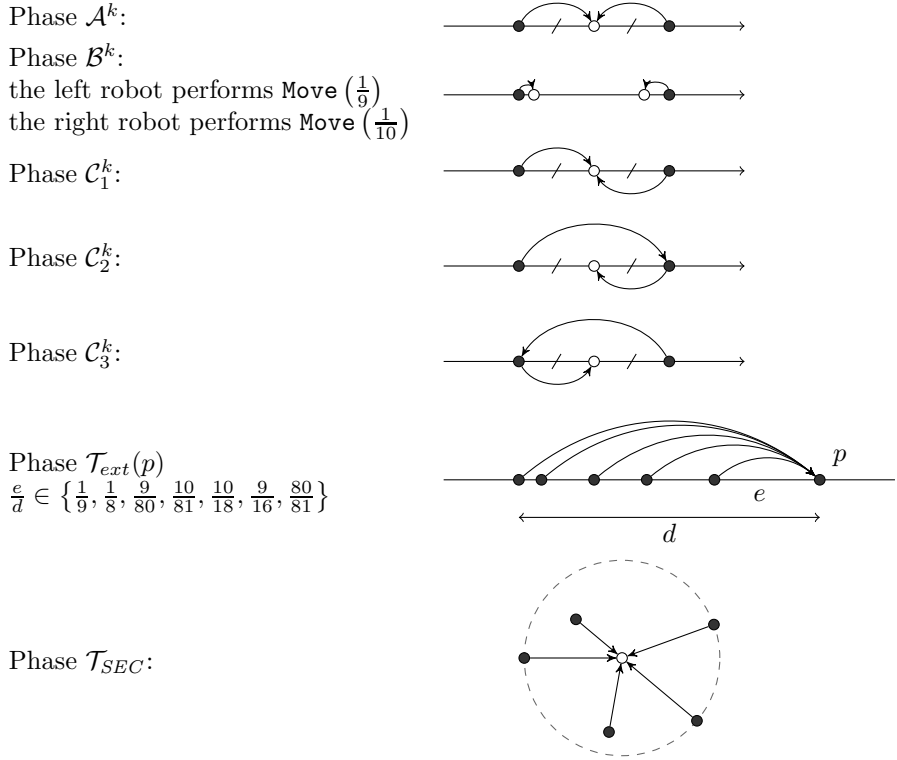

\subsection{Proof of Correctness} We prove in the following the correctness of our solution. 

\begin{lemma}\label{lem:movement are eventually rigid}
If robots do not gather, then they eventually remain at a distance at most $\delta$ from each other.
\end{lemma}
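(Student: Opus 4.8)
The plan is to show that the maximum pairwise distance among robots is non-increasing and, more importantly, that it strictly decreases by a fixed amount within a bounded number of rounds as long as the robots have not gathered, so that eventually this maximum drops below $\delta$ (and stays there). First I would argue that Algorithm~\ref{algo:SUIG} never increases the diameter of the configuration: in phase $\PhaseSEC$ every robot moves (partway) toward the center of the smallest enclosing circle, which cannot increase the SEC radius and hence cannot increase the maximum distance; in phase $\PhaseText(p)$ every robot moves toward the extremal point $p$, which lies on the segment of aligned robots, so again the diameter does not grow; and in the $\Conf_2$ cases (phases $\PhaseA{k},\PhaseB{k},\PhaseC{k}_1,\PhaseC{k}_2,\PhaseC{k}_3$) there are only two occupied points and every robot moves toward the other point, so the distance between the two points can only shrink (move $\move{1}$ in the worst case could in principle overshoot, but since movement is toward the other point and stops there, the new distance is at most the old one). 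Thus the quantity $D_t = \max_{r,r'} \lVert p_r(t) - p_{r'}(t)\rVert$ is non-increasing.

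Next I would establish a uniform progress statement: from any non-gathered configuration, within $O(1)$ rounds the diameter decreases by at least $\min(\delta, c\cdot D_t)$ for some absolute constant $c>0$. The key case is $\PhaseSEC$, where all robots head to the SEC center: an extremal robot on the boundary moves distance at least $\min(\delta, R_t)$ toward the center, where $R_t$ is the SEC radius and $R_t \geq D_t/2$; a standard SEC argument shows that when all boundary robots move toward the center by a common lower bound, the radius drops by a proportional amount, giving a decrease of at least $\min(\delta, c D_t)$ in $R_t$ and hence in $D_t$. For $\PhaseText(p)$, all robots move toward $p$, which is within distance $d$ of every robot, so after one round (if movements were rigid) they would all be at $p$; with non-rigid movements each robot still advances by $\min(\delta, \text{(its distance to }p))$, and I would check that within a constant number of rounds the configuration either gathers, leaves $\PhaseText$, or has its diameter reduced. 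Finally, for $\Conf_2$, I would reuse the SUIR-style analysis already developed (Lemmas~\ref{lem:f decreases by fixed distance} and~\ref{lem:distance decreases by fixed distance} handle the case where at least one robot heads to the middle; the level-based bookkeeping in the $\PhaseA{k}/\PhaseB{k}/\PhaseC{k}$ transitions guarantees that within a bounded number of rounds some robot is dictated to move to the middle or the distance otherwise halves), so the two points come together by at least $\min(\delta, d/2)$ every constant number of rounds.

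Combining these, in every window of $O(1)$ rounds either the robots gather or $D_t$ decreases by at least $\min(\delta, cD_t)$. As long as $D_t > \delta/c$, the decrease is at least $\delta$ per window, so after finitely many windows we reach $D_t \leq \delta/c$; and once $D_t$ is that small, subsequent decreases are of the form $cD_t$, but crucially $D_t$ never increases, so from some round onward $D_t \leq \delta$ permanently. This is exactly the statement of the lemma.

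The main obstacle I anticipate is the $\PhaseText(p)$ and the inter-phase transitions in $\Conf_2$: one has to be careful that the adversary, by stopping robots non-rigidly, cannot keep the configuration bouncing between phases in a way that prevents the diameter from ever dropping below $\delta$ — in particular that a configuration in $\PhaseText$ cannot be perpetually re-entered with the same diameter. I would handle this by noting that every move in $\PhaseText$ is strictly toward $p$ by a positive fraction of $d$ (the robot closest to $p$ moves by at least $\min(\delta, \tfrac{e}{d}\,d)$ away from being extremal, or the extremal structure is destroyed), so either the diameter $d$ itself shrinks or the configuration leaves $\PhaseText$ within a bounded number of rounds; a symmetric care is needed for the fraction-bookkeeping in $\PhaseB{k}$ where the two robots use $\move{1/9}$ and $\move{1/10}$ and could be stopped at awkward points, but since both moves are toward the other point the two-point distance still strictly decreases.
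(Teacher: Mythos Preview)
Your overall strategy---show that the diameter $D_t$ is non-increasing and drops by at least $\min(\delta,cD_t)$ every $O(1)$ rounds---is essentially the paper's approach (the paper just frames it by contradiction: assume $D(t)>\delta$ forever, let $D(t)$ converge to some $B\geq\delta$, wait until $D(t)\in[B,B+\delta/10)$, and then force a drop of at least $\delta/10$).

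The genuine gap is that you never mention \emph{crashed} robots. The SUIG setting allows one crash location, and the substantive part of the paper's proof (its Cases~A, D and~E) is precisely that a crashed robot sitting at an extremity can block the one-round progress you assert. Concretely:
\begin{itemize}
\item In $\PhaseText(p)$ with the crashed robot at the \emph{other} extremity $q$, every correct robot moves to $p$ while the crashed one stays at $q$; the diameter is still exactly $d$ after that round. Progress only arrives in a \emph{subsequent} round, once the configuration has become two points with the crashed robot alone on one side.
\item In $\PhaseSEC$, your sentence ``when all boundary robots move toward the center by a common lower bound, the radius drops'' fails verbatim when one boundary robot is crashed; you need the sharper observation that the remaining robots moving $\min(\delta,R_t)$ toward the center still shrinks the diameter by that amount even with one fixed boundary point.
\item In $\Conf_2$, if the crashed location also hosts correct robots and every correct robot executes $\move{1}$, the two groups can swap exactly and the diameter stays $d$. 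The paper's Case~E argues this can persist for at most two consecutive rounds before some robot's level increases twice, which forces the distance to halve---an argument specific to the SUIG level structure and not covered by Lemmas~\ref{lem:f decreases by fixed distance}--\ref{lem:distance decreases by fixed distance}, which are written for Algorithm~\ref{algo:disoriented FSYNC} (different moves, different level cycle) rather than Algorithm~\ref{algo:SUIG}.
\end{itemize}
Your ``within $O(1)$ rounds'' clause may well survive, but the work needed to justify it is exactly the crashed-robot case analysis you have omitted and that constitutes the bulk of the paper's proof; the obstacles you flag (non-rigid stopping, phase bouncing) are comparatively minor.
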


\begin{proof}
If the robots do not gather at the end of a round $t$, then there are at least two occupied locations. Let $D(t)$ be the diameter of the Smallest Enclosing Circle (SEC) at time $t$. Assume for the purpose of contradiction that for all $t\geq 0$, $D(t) > \delta$. Since robots never move outside of the convex hull they form, the function $t\mapsto D(t)$ is non-increasing, hence converges to a value $B\geq\delta$, and there is a time $T$ such that, 

\begin{equation}\label{eq:robots remains far away}
  \forall t>T,\qquad D(t) \in [B, B+\delta/10).   
\end{equation}

Consider any such time $t > T$. By our algorithm, the robots either move to the center of the Smallest Enclosing Circle (SEC), move to an extremity (when in phase $\PhaseText$), or perform $\move{e}$, $e\in \{\frac{1}{10}, \frac{1}{9}, \frac{1}{2}, 1\}$. 
We consider different cases depending on what moves robots execute in configuration 
$C(t)$.
\begin{itemize}
    \item \textbf{Case A:} If robots are in phase $\PhaseText(p)$, then the configuration is not symmetric and all robots have the same target, point $p$. Let $q \ne p$ be the other extremity of the segment hosting all robots. If no robot is crashed at $q$, then the diameter of the SEC decreases by at least $\delta/10$ (in the worst case, the crashed robot is at distance $9D(t)/10$ from $p$), which contradicts Relation~\ref{eq:robots remains far away}.
    If there is a crashed robot at $q$, then, in configuration $C(t+1)$, no correct robot is collocated with the crashed robot at $q$ and either case $B$, $C$ or $D$ applies to reach a contradiction.

    \item \textbf{Case B:}
If robots are in phase $\PhaseSEC$, then all the robots (except maybe one crashed robot) move towards the center of the SEC. If the diameter of the SEC is at most $2\delta$, each robot reaches its target and the SEC at time $t+1$ is at least halved, so decreases by at least $\delta/2$ (because the radius at time $t$ is $D(t)\geq\delta$). If the diameter of the SEC is greater than $2\delta$, as illustrated in Figure~\ref{fig:case B}, the diameter of the SEC decreases by at least $\delta$ (it decreases by $2\delta$ if there is no crash).
Both cases contradict Relation~\ref{eq:robots remains far away}.

In the remaining cases, $C(t)$ consists of two points.

\begin{figure}
    \centering
\begin{tikzpicture}
\filldraw[color=black!60, fill=white, dashed](0,0) circle (1.3);
\filldraw[color=black!60, fill=white, dashed](-0.2,0) circle (1.1);

\tikzset{
    cross/.pic = {
    \draw[rotate = 45] (-#1,0) -- (#1,0);
    \draw[rotate = 45] (0,-#1) -- (0, #1);
    }
}
\node[] (a) at (-1.3,0) {};
\path (-1.3,0) pic[red] {cross=4pt};
\node[robot] (b) at (20:1.3) {};
\node[robot] (c) at (-40:1.3) {};
\node[robot] (r) at (140:0.8) {};
\node[robot] (r2) at (-100:1) {};
\node[circle, draw=black, fill=white, inner sep=0.002cm] (d) at (0,0) {};

\path (b) edge[->,>=stealth'] node [left] {} (20:0.9);
\path (c) edge[->,>=stealth'] node [left] {} (-40:0.9);
\path (r) edge[->,>=stealth'] node [left] {} (140:0.4);
\path (r2) edge[->,>=stealth'] node [left] {} (-100:0.6);

\path (0:0.9) edge[<->,>=stealth'] (0:1.3);
\node[] at (-15:1.1) {$\delta$};

\end{tikzpicture} 
    \caption{Case B: Reduction of the diameter of the SEC by $\delta$ when each robot, except the crashed one, move towards the center of the SEC}
    \label{fig:case B}
\end{figure} 
    \item \textbf{Case C:}
If robots are not in phase $\PhaseText\cup\PhaseSEC$ and no robot executes $\move{1}$, then the robots at the two extremities execute $\move{\frac{1}{10}}$, $\move{\frac{1}{8}}$ or $\move{\frac{1}{2}}$ and the diameter of the SEC (which is the distance between the two extremities) decreases by at least $\min(\delta, d/10) \geq \delta/10$, which contradicts Relation~\ref{eq:robots remains far away}.

    \item \textbf{Case D:}
If robots are not in phase $\PhaseText\cup\PhaseSEC$, some robots execute $\move{1}$, and if there is a crashed robot with no collocated correct robots. In this case, all the correct robots move towards the crashed robot a distance at least $\delta/10$ (because some robots might be in different phases), so the diameter of the SEC decreases by at least $\delta/10$, which contradicts Relation~\ref{eq:robots remains far away}.

    \item \textbf{Case E:}
If robots are not in phase $\PhaseText\cup\PhaseSEC$, some robots execute $\move{1}$, and if there is no crashed robot or a crashed robot collocated with correct robots.
If there is a robot that executes a move different from $\move{1}$, then the next configuration $C(t+1)$ does not consist of only two points and a previous case applies to reach a contradiction.
Now assume all robots execute $\move{1}$ and configuration $C(t+1)$ consists in only two points. Since $D(t+1)$ is still in the interval $[B, B + \frac{\delta}{10})$, then the moving robots must have changed their side on the segment. So if the level of robot has not changed, it executes $\move{\frac{1}{2}}$ at time $t+1$.
At time $t+1$, either a previous case applies to reach a contradiction (if a robot executes a move different to $\move{1}$) or the same case applies.
When the same case applies, as illustrated in Figure~\ref{fig:case E}, all the robots have increased their level and execute again $\move{1}$ at time $t+1$. If the same case applies at time $t+2$, this means that their level increased two times and this occurs only if the distance between the two points has been halved at least once, which contradicts Relation~\ref{eq:robots remains far away}. So a previous case applies and a contradiction is reached as well.

\begin{figure}
    \centering
\begin{tikzpicture}

\draw[] (-0.5,0) -- (3.5,0);
\node[robot] (a) at (0,0) {};
\node[robot] (b) at (3,0) {};

\node[dest] (d1) at (2.85,0) {};
\node[dest] (d2) at (0.15,0) {};

\path (a) edge[bend left=-60,->,>=stealth'] node [left] {} (d1);
\path (b) edge[bend right=60,->,>=stealth'] node [left] {} (d2);

\draw[->,>=stealth'] (1.5,-1.1) -- (1.5,-1.5);

\begin{scope}[yshift=-2.5cm]
\draw[] (-0.5,0) -- (3.5,0);
\node[robot] (a) at (0.15,0) {};
\node[robot] (b) at (2.85,0) {};

\node[dest] (d1) at (2.7,0) {};
\node[dest] (d2) at (0.3,0) {};

\path (a) edge[bend left=-60,->,>=stealth'] node [left] {} (d1);
\path (b) edge[bend right=60,->,>=stealth'] node [left] {} (d2);

\draw[decoration={
            text along path,
            text={|\footnotesize| Impossible},
            text align={center},raise=0.2cm},decorate] (-1,-1.5) --  (0.5,-1.1);
\draw[<-,>=stealth'] (-1,-1.5) --  (0.5,-1.1);

\draw[<-,>=stealth'] (4,-1.5) --  (2.5,-1.1);

\end{scope}

\begin{scope}[yshift=-5.0cm, xshift=-3cm]
\draw[] (-0.5,0) -- (3.5,0);
\node[robot] (a) at (0.3,0) {};
\node[robot] (b) at (2.7,0) {};

\node[dest] (d1) at (2.55,0) {};
\node[dest] (d2) at (0.45,0) {};

\path (a) edge[bend left=-60,->,>=stealth'] node [left] {} (d1);
\path (b) edge[bend right=60,->,>=stealth'] node [left] {} (d2);

\end{scope}

\begin{scope}[yshift=-5.0cm, xshift=3cm]
\draw[] (-0.5,0) -- (3.5,0);
\node[robot] (a) at (0.3,0) {};
\node[robot] (b) at (2.7,0) {};

\path (a) edge[bend right=-60,->,>=stealth', dotted] node [left] {} (0.9,0);
\path (a) edge[bend right=-60,->,>=stealth', dotted] node [left] {} (1.2,0);
\path (a) edge[bend left=-60,->,>=stealth', dotted] node [left] {} (1.5,0);
\path (b) edge[bend left=60,->,>=stealth', dotted] node [left] {} (2.1,0);
\path (b) edge[bend right=60,->,>=stealth', dotted] node [left] {} (1.5,0);

\end{scope}

\end{tikzpicture} 
    \caption{Case E: If robots nearly exchange their position and increase their level, it is possible they all execute $\move{1}$ again, but this  cannot occur one more time, because the distance between the two points must be at least halved in order for the robot to see their level increased by 2.}
    \label{fig:case E}
\end{figure}
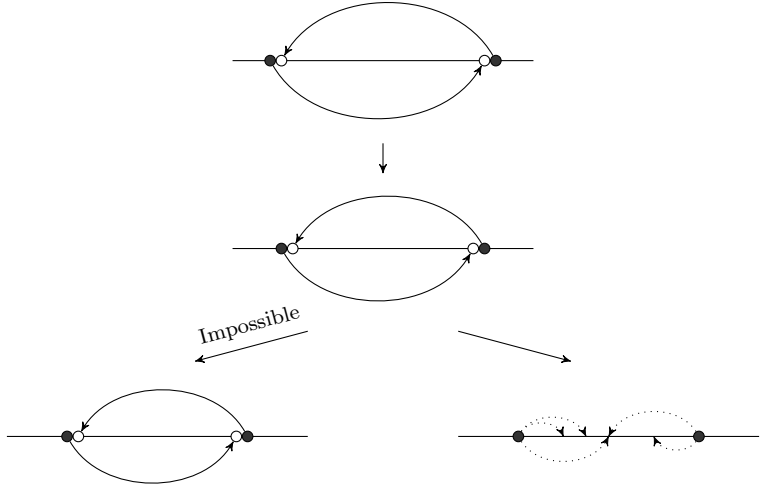 \end{itemize}
\end{proof}

By Lemma~\ref{lem:movement are eventually rigid}, we can assume without loss of generality that robot movements are rigid. The next Lemma and its Corollary show how we use the fact that phases contain an increasing number of levels ($\PhaseA{k+1}$ has 1 more level than $\PhaseA{k}$). Lemma~\ref{lem:increase of level is at most 7} states that robots cannot ``jump up'' more than 7 levels, and Corollary~\ref{lem:all robots are in phase A (resp. B)} uses this fact to state that there is a time when all the robots are in the same phase $\PhaseA{k}$, resp. $\PhaseB{k}$, for a $k$ large enough.

In the sequel, the \emph{lowest robot} denotes the robot with the lowest level. $\Delta_{level}$ is the maximal difference between any two robot levels, \ie, $\Delta_{level} = \max_{r, r'}|l_r - l_{r'}|$. The difference may depend on the configuration but is upper-bounded. Indeed, if the lowest robot has a unit distance less than $2^c$ times larger than the highest robot, then the difference in level is at most $c$. 

\begin{lemma}\label{lem:increase of level is at most 7}
In any execution, either all robots gather, or there exists a level $l$ such that for any level $l'\geq l$, the robots reach a configuration consisting in two points and the lowest robot has a level in the interval $[l', l' + 7]$.
\end{lemma}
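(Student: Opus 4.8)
The plan is to reduce the general question to controlling how much a robot's level can grow in a single round, and then to invoke Lemma~\ref{lem:movement are eventually rigid} to make movements rigid so that distances change by clean multiplicative factors. First I would dispose of the case where robots gather, and otherwise use Lemma~\ref{lem:movement are eventually rigid} to assume that after some point all configurations consist of two points at distance at most $\delta$, so every move is rigid; in particular, whenever the configuration stays in $\Conf_2$, the distance between the two points after a round is obtained from the previous distance by multiplying by one of finitely many rational factors coming from the moves $\move{\frac{1}{2}}$, $\move{\frac{1}{9}}$, $\move{\frac{1}{10}}$, $\move{1}$ and the various combinations of left/right actions (e.g. both move to the middle $\Rightarrow$ factor $\frac{1}{2}$; one does $\move{1}$ and the other $\move{\frac{1}{2}}$ $\Rightarrow$ factor $\frac{1}{2}$ with a side switch; etc.). Each such factor lies in $(0,1]$ and is bounded below by a universal constant (the smallest is $\frac{1}{10}$, or rather $\frac{8}{10}\cdot$something once one works out the worst combination), so a single round can decrease the distance by at most a bounded factor; this is exactly the statement that the level increases by at most a constant per round — I would compute that constant and check it is $\le 7$ (since $2^{-7} < \frac{1}{100}$ comfortably dominates any product of at most two of these factors, with room to spare for the $\move{\frac{1}{9}}/\move{\frac{1}{10}}$ pair which gives roughly factor $\frac{8}{10}\cdot\frac{9}{10}$... actually the relevant worst case is two robots both aiming at each other and nearly swapping, studied in Case~E, giving a distance that can shrink but the level change is still at most a fixed amount).

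Given the per-round level-increase bound, I would argue monotonicity: the lowest level $l_{\min}$ of the configuration (over the two robots, or rather its minimum over robots, which is well-defined up to the observing robot's unit but bounded by $\Delta_{level}$) is non-decreasing in rounds where the configuration remains in $\Conf_2$, because in every phase at least one robot moves and the distance between the two points does not increase (robots never move outside their convex hull), hence $d_r$ does not increase and $l_r$ does not decrease. Combined with Lemma~\ref{lem:movement are eventually rigid}, which guarantees that from some round on the configuration is permanently in $\Conf_2$ with small distance, the minimal level tends to $+\infty$: indeed if it were bounded, the distance between the two points would be bounded below by a positive constant, but every round in $\Conf_2$ strictly decreases the distance by a bounded multiplicative factor $<1$ unless gathering occurs — here one must be slightly careful, because in Case~E of Lemma~\ref{lem:movement are eventually rigid} it is possible to have a round with a side-switch where the distance does not shrink much, but that lemma already shows such configurations cannot persist, so over a bounded number of rounds the distance is forced down by a constant factor. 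Therefore for every target level $l'$ there is a round at which the lowest robot first has level $\ge l'$; by the per-round bound of at most $7$, at that round its level lies in $[l', l'+7]$, which is precisely the claim.

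The main obstacle I expect is making the phrase "the lowest robot has a level in $[l', l'+7]$" precise and robust to the fact that levels are defined relative to each observing robot's private unit distance, so two robots in the same two-point configuration may disagree on their levels by up to $\Delta_{level}$; I would handle this by fixing, for the statement, the convention that "level of the lowest robot" means $\min_r l_r$, noting that for the robot achieving this minimum the distance $d_r$ is the largest among the two robots, and tracking how $\min_r l_r$ evolves — it is this quantity whose per-round jump is bounded (by the same finitely-many-factors argument applied to whichever robot's unit is relevant) and which is non-decreasing. A secondary subtlety is the boundary/degenerate conventions ($\PhaseA{0}$ redefined to absorb all levels $<1$, and the fact that $S_k$ is chosen precisely so the phases tile all non-negative levels), which I would invoke only to assert that "having level $\ge l'$" is a meaningful well-posed event for every $l'$. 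Once these bookkeeping points are settled, the argument is: monotone integer sequence that is unbounded (by the distance-shrinking argument backed by Lemma~\ref{lem:movement are eventually rigid}) with bounded increments $\le 7$ must hit every interval $[l', l'+7]$, which gives the lemma.
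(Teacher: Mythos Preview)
Your overall skeleton --- show that the lowest level is non-decreasing, unbounded, and jumps by at most a fixed constant, hence must visit every interval $[l',l'+7]$ --- is exactly the paper's strategy. The gap is in how you justify the bounded-jump step.

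You claim that Lemma~\ref{lem:movement are eventually rigid} guarantees that ``from some round on the configuration is permanently in $\Conf_2$''. It does not. That lemma only tells you that eventually all robots lie within distance~$\delta$ of one another, so moves become rigid; it says nothing about the \emph{number of occupied points}. In fact, even once movements are rigid, a two-point configuration can split into three or more points in one round: robots co-located at the same point may have different unit distances (hence different levels and phases) or different orientations, so some execute $\move{\tfrac12}$ while others execute $\move{\tfrac19}$ or $\move{\tfrac1{10}}$. Your argument, which only tracks the multiplicative factor on~$d$ ``whenever the configuration stays in $\Conf_2$'', simply does not cover these excursions, and the per-round jump bound you derive is vacuous at those rounds because levels are not even defined outside $\Conf_2$.

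The paper closes this gap by proving a two-round claim rather than a one-round claim: from a two-point configuration at distance~$d$, either the robots gather within two rounds, or at time $t{+}1$ or $t{+}2$ they are again in a two-point configuration at distance $d'\ge d/2^7$. The extra round is needed precisely for the three-or-more-point case: there the algorithm is in $\PhaseSEC$ or $\PhaseText$, all robots share a single target, and one rigid round collapses everything to two points (the target and the crashed location) with $d''\ge d/20$. The paper also identifies the genuine worst-case shrink factor when the configuration does stay in $\Conf_2$: it is $\tfrac{1}{9}-\tfrac{1}{10}=\tfrac{1}{90}$, arising from the closest pair among all possible destinations, not the $\tfrac{1}{10}$ or ``$\tfrac{8}{10}\cdot\text{something}$'' you sketch. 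Both $1/90$ and $1/20$ exceed $2^{-7}$, which is where the~$7$ comes from. Once you add this two-round analysis (and compute the correct minimum factor), your outline becomes the paper's proof.
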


\begin{proof}
Assume the robots do not gather, from Lemma~\ref{lem:movement are eventually rigid} we can assume rigid movements from a given time $t_0$. If $C(t_0)$ consists in two points, let $l$ be the level of the lowest robot in $C(t_0)$. Otherwise, all robots are dictated to move towards the same destination (the phase is $\PhaseSEC$ or $\PhaseText$) so either all robots gather in $C(t_0+1)$, or $C(t_0+1)$ consists of two points (if there is a crashed robot). In the latter case, let $l$ be the level of the lowest robot in $C(t_0+1)$.

To prove the Lemma, it is sufficient to prove the following claim: if a configuration $C(t)$ at time $t$ consists in two points at distance $d$, then either $C(t+1)$ or $C(t+2)$ consists in two points at distance $d' \geq \frac{d}{2^7}$, or all robots gather in $C(t+1)$ or $C(t+2)$. Indeed, this implies that the level between two consecutive configurations forming two points cannot increase by more than 7. This implies that, for any $l'\geq l$ we eventually reach a configuration where the lowest robot has a level in the interval $[l', l' + 7]$, or all robots gather.  

We now prove the claim. 
After executing the algorithm from $C(t)$, either all robots gather, or $C(t+1)$ consists in two points, or in three or more points. If $C(t+1)$ consists in two points, the distance between any two points is at least $(\frac{1}{9} - \frac{1}{10})d = \frac{1}{90}d \geq \frac{1}{2^7}d$. So, the level increases by at most 7 between $C(t)$ and $C(t+1)$.

If $C(t+1)$ consists in 3 points or more, the minimal interval containing three points is at least $\frac{1}{10}d$, so that the extremities (recall the robots are aligned) are at distance $d' \geq \frac{1}{10}d$.
In $C(t+1)$ the robots all move towards the same destination. If there is no crashed robot or if the crashed robot is already at the destination, then the next configuration $C(t+2)$ is gathered. Otherwise, since the crashed robot is located at an extremity and the destination is either an extremity or the middle, then $C(t+2)$ consists of two points at distance $d''\geq d'/2$. Hence we have $d'' \geq \frac{1}{20}d \geq \frac{1}{2^7}d$.

So the claim is proved and the Lemma follows.
\end{proof}

\begin{corollary}\label{lem:all robots are in phase A (resp. B)}
For any $k\geq 0$, eventually either all robots gather, or the robots are all in phase \PhaseA{k'} with $k'\geq k$. The same is true with phase \PhaseB{}.
\end{corollary}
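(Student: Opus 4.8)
The plan is to combine Lemma~\ref{lem:increase of level is at most 7} (levels cannot "jump up" by more than 7 between two consecutive two-point configurations) with the observation that the phases $\PhaseA{k}$, $\PhaseB{k}$, $\PhaseC{k}_1$, $\PhaseC{k}_2$, $\PhaseC{k}_3$ contain a growing number of consecutive levels — in particular $\PhaseA{k}$ spans $k$ consecutive levels, so for $k$ large enough it is wider than the bound $\Delta_{level}$ on the spread of robot levels, and also wider than $7$. So the rough idea is: go far enough up in the level hierarchy that a single phase $\PhaseA{k'}$ is wide enough to contain \emph{all} robots simultaneously, and show that the execution is forced to pass through a configuration where the lowest robot sits low enough inside that phase that every robot is also inside it.

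First I would fix $k$ and set $w = \max(\Delta_{level}, 7) + 1$; choose $k' \geq \max(k, w)$ so that $\PhaseA{k'}$ spans $k' \geq w$ consecutive levels, namely the interval $[S_{k'}, S_{k'}+k')$. Next I would invoke Lemma~\ref{lem:increase of level is at most 7}: assuming the robots do not gather, for the target level $l' = S_{k'} + \Delta_{level}$ (which is $\geq l$ provided we also took $k'$ large enough that $S_{k'} \geq l$), the execution reaches a configuration $C$ consisting of two points in which the lowest robot has level in $[l', l'+7] = [S_{k'}+\Delta_{level},\, S_{k'}+\Delta_{level}+7]$. Since $\Delta_{level}+7 < w \leq k'$, the lowest robot's level lies in $[S_{k'}, S_{k'}+k')$, i.e.\ the lowest robot is in phase $\PhaseA{k'}$. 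Then, because $\Delta_{level}$ bounds the spread of levels, every other robot has level within $\Delta_{level}$ above the lowest, hence within $[S_{k'}+\Delta_{level}, S_{k'}+2\Delta_{level}+7] \subseteq [S_{k'}, S_{k'}+k')$: all robots are in $\PhaseA{k'}$ in configuration $C$. This proves the first claim. For phase $\PhaseB{}$, the argument is identical: $\PhaseB{k}$ also spans $k$ consecutive levels $[S_k+k, S_k+2k)$, and I would apply Lemma~\ref{lem:increase of level is at most 7} with target level $l' = S_{k'} + k' + \Delta_{level}$ to land the lowest robot — and hence every robot — inside $\PhaseB{k'}$.

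The one subtlety to get right is the interplay of the two quantifiers: $\Delta_{level}$ is a bound that depends on the (unit-distance) disagreement among robots and does not change over the execution, but it must be compared against the \emph{fixed} width of the phase, so we need $k'$ to dominate $\Delta_{level}$; this is exactly why the phases were designed with unboundedly increasing width. I do not expect a real obstacle here — Lemma~\ref{lem:increase of level is at most 7} already did the heavy lifting of controlling how fast the lowest level can climb — the only care needed is bookkeeping with the offsets $S_k$, $S_{k+1}=S_k+2k+3$, and checking that the chosen target level $l'$ satisfies $l' \geq l$ so the Lemma applies. I would also note that "eventually" is witnessed by the single configuration $C$ produced by the Lemma, and that from then on nothing forces the robots back out before the relevant later phases take over (this is the content of the subsequent lemmas, so here it suffices to exhibit one such configuration).
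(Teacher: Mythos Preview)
Your approach is essentially the paper's: use Lemma~\ref{lem:increase of level is at most 7} to land the lowest robot near the start of some $\PhaseA{k'}$ with $k'$ large enough that the whole $\Delta_{level}$-wide spread of robot levels fits inside the $k'$-wide phase. The structure and the key idea are correct.

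There is, however, a concrete arithmetic slip. You set $w = \max(\Delta_{level}, 7) + 1$ and then assert $\Delta_{level}+7 < w$; this fails whenever $\Delta_{level} \geq 1$ (e.g.\ $\Delta_{level}=10$ gives $w=11$ but $\Delta_{level}+7=17$). With your choice of target $l' = S_{k'}+\Delta_{level}$, the highest robot can reach level $S_{k'}+2\Delta_{level}+7$, so you actually need $k' > 2\Delta_{level}+7$. The paper avoids this by taking the simpler target $l' = S_{k'}$ (no offset): then the lowest robot lands in $[S_{k'}, S_{k'}+7]$, all robots lie in $[S_{k'}, S_{k'}+\Delta_{level}+7]$, and it suffices to require $k' > \Delta_{level}+7$. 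Your offset by $\Delta_{level}$ is unnecessary since ``lowest'' already means no robot sits below it. Fix $w$ (or drop the offset) and the proof goes through exactly as in the paper.
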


\begin{proof}
Let $l$ be defined by Lemma~\ref{lem:increase of level is at most 7}, and let $k > \Delta_{level} + 7$ such that $S_k \geq l$.
We can now apply Lemma~\ref{lem:increase of level is at most 7} with $l' = S_{k'}$, and $k' \geq k$ (clearly we have $l' > l$ so the lemma applies). Thus, we know that either the robots eventually gather, or reach a configuration where the lower robot has a level in the interval $[S_{k'}, S_{k'} + 7]$. Since $k'> \Delta_{level} + 7$, all the robots have a level in the interval $[S_{k'}, S_{k'} + k' - 1]$, and they all are in phase \PhaseA{k'}.

A similar proof can be done by replacing \PhaseA{k'} with \PhaseB{k'}.
\end{proof}

We are now ready to prove the correctness of our algorithm. We first prove that robots gather if no robot crashes (Lemma~\ref{lem:correctness, no robots crash}), and then when robots may crash at the same location (Lemma~\ref{lem:correctness, a robot crashes}). For simplicity, we consider in the sequel, without loss of generality, that the movements are rigid (using Lemma~\ref{lem:movement are eventually rigid}).

\begin{lemma}\label{lem:correctness, no robots crash}
If no robot crashes, all robots eventually gather at the same point.
\end{lemma}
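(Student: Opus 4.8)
The plan is to leverage Corollary~\ref{lem:all robots are in phase A (resp. B)} together with the structure of the algorithm to force all robots through a controlled sequence of phases that ends in gathering. First I would invoke Lemma~\ref{lem:movement are eventually rigid} to assume movements are rigid, and then dispose of the easy cases: if infinitely often the configuration lies in $\PhaseSEC$ or $\PhaseText$, then since every robot moves toward a common target (the center of the SEC, or the extremal point $p$), with no crash all robots collocate in one round, so gathering is achieved. Hence we may assume that eventually the configuration is always in $\Conf_2$, i.e.\ two points, and by Corollary~\ref{lem:all robots are in phase A (resp. B)} we may pick $k$ as large as we like and assume that at some round all robots are simultaneously in phase $\PhaseB{k}$ (choosing the $\PhaseB{}$ version of the corollary).

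The heart of the argument is then a deterministic ``ratchet'': starting from a round where all robots are in $\PhaseB{k}$, I would track how the two occupied points and the robots' levels evolve. In $\PhaseB{k}$ the left robot performs $\move{1/9}$ and the right robot performs $\move{1/10}$; with rigid moves this produces a configuration of two points whose distance is $(1/9-1/10)d = d/90$ of the original, so every robot's level jumps up by a fixed amount ($\lceil \log_2 90\rceil = 7$ levels, after accounting for the at-most-one level of ambiguity in the definition). Because the band of $\PhaseB{k}$ has width $k$ levels and $k$ was chosen with $k > \Delta_{level} + 7$, this single $\PhaseB{}$ step cannot carry any robot past the top of band $\PhaseB{k}$ in a way that skips $\PhaseC{k}_1,\PhaseC{k}_2,\PhaseC{k}_3$; and since the configuration stays in $\Conf_2$ (two points at distance $d/90$, which is not gathered yet and is not in $\PhaseText$ since there are only two occupied points), the robots land in $\PhaseC{k}_1$, then $\PhaseC{k}_2$, then $\PhaseC{k}_3$ in the next three rounds — here I reuse the level-bookkeeping exactly as in the SUIR proof (Lemmas~\ref{lem: proof of algo, correct, common} and~\ref{lem: proof of algo, correct, opposite}), since $\PhaseC{k}_1,\PhaseC{k}_2,\PhaseC{k}_3$ with their moves $\move{1/2}$, $\emph{left}\to\move{1};\emph{right}\to\move{1/2}$, $\emph{left}\to\move{1/2};\emph{right}\to\move{1}$ are precisely the $i\equiv 0,1,3 \bmod 4$ cases of Algorithm~\ref{algo:disoriented FSYNC} on two ``super-robots''. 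The point of the $\PhaseB{k}$ step, which has the two sides moving by \emph{different} fractions, is to break the symmetry: after it, all robots agree on which of the two points is which (they see distinct views), so when they subsequently reach $\PhaseC{k}_1,\PhaseC{k}_2,\PhaseC{k}_3$ they move consistently and collapse to a single point rather than splitting into more points.

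More carefully, the key invariant I would maintain through the $\PhaseC{k}_1\to\PhaseC{k}_2\to\PhaseC{k}_3$ sequence is that the configuration always consists of exactly two points: each of the moves $\move{1/2}$ and $\move{1}$ maps a two-point configuration to a two-point configuration (for $\move{1}$ the two groups exchange sides, for $\move{1/2}$ they merge unless... here I'd note $\move{1/2}$ merges only if \emph{both} sides do it, which is exactly case $\PhaseC{k}_1$ where the configuration becomes a single point — wait, need to recheck against the SUIR analog). Following the SUIR case analysis: in $\PhaseC{k}_1$ (analog of $i\equiv 0$) both robots move to the middle and gather in one round if their levels are equal; if not exactly equal, the level difference (bounded by $\Delta_{level}$, which is $\le 7 < k$) still keeps them inside the $\PhaseC{k}_1,\PhaseC{k}_2,\PhaseC{k}_3$ window for one full cycle, at the end of which the distance has been halved enough times that all robots' levels exceed the window and the argument restarts at a larger $k$ — but by the time this has happened the distance has shrunk geometrically, and combined with Lemma~\ref{lem:increase of level is at most 7} bounding the jumps, only finitely many restarts are possible before the two points coincide. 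I expect the main obstacle to be exactly this last bookkeeping: making the ``two super-robots with levels differing by at most $\Delta_{level}$'' reduction to the SUIR lemmas fully rigorous, since the SUIR proof assumed the two entities have a \emph{single} level each, whereas here a ``point'' is a cluster of robots that may individually perceive slightly different levels; I would handle this by showing that the band widths ($k$, $k$, $1$, $1$, $1$ for $\PhaseA{k},\PhaseB{k},\PhaseC{k}_1,\PhaseC{k}_2,\PhaseC{k}_3$) together with $k>\Delta_{level}+7$ guarantee all robots stay phase-synchronized through the critical $\PhaseB{k}\PhaseC{k}_1\PhaseC{k}_2\PhaseC{k}_3$ window, so that they all execute the same pair of side-dependent moves and the SUIR analysis applies verbatim to the two point-clusters, yielding gathering within that window.
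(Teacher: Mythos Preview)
Your approach takes a detour that the paper avoids entirely, and the detour has genuine gaps. The paper's proof is one line: invoke Corollary~\ref{lem:all robots are in phase A (resp. B)} with the $\PhaseA{}$ version (not $\PhaseB{}$), so that at some round every robot is in phase $\PhaseA{k}$; in that phase \emph{every} robot executes $\move{\frac{1}{2}}$ regardless of its orientation or its exact level within the width-$k$ band, so with no crash and rigid movements all robots land on the midpoint and gather in a single round. The whole point of phase $\PhaseA{k}$ is precisely that it is orientation- and level-insensitive, which is what the no-crash case needs.

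By choosing $\PhaseB{k}$ instead, you inherit all the complications that phase was designed to create (for the \emph{crash} case, Lemma~\ref{lem:correctness, a robot crashes}), and several of your concrete claims are incorrect. First, in $\PhaseB{k}$ robots at the same point may have opposite orientations and hence execute different moves ($\move{\frac{1}{9}}$ vs.\ $\move{\frac{1}{10}}$), so the resulting configuration need not stay in $\Conf_2$; it can have three or four points. Second, even when it does stay two points, the new distance is $d\bigl(1-\tfrac{1}{9}-\tfrac{1}{10}\bigr)=\tfrac{71}{90}d$ (or $\tfrac{7}{9}d$, or $\tfrac{4}{5}d$, depending on orientations), not $\tfrac{d}{90}$; the quantity $\tfrac{1}{9}-\tfrac{1}{10}$ is the gap between the two step sizes, not the residual separation. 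Consequently the level increases by at most one, not seven, and robots do not leap into $\PhaseC{k}_1$. Third, the synchronization argument you sketch for $\PhaseC{k}_1,\PhaseC{k}_2,\PhaseC{k}_3$ cannot work as stated: those bands each have width~$1$, so whenever $\Delta_{\mathit{level}}\geq 1$ the robots cannot all be in the same $\PhaseC{k}_j$ simultaneously, and the SUIR reduction to ``two super-robots'' breaks down exactly where you feared it would. The fix is not more bookkeeping on $\PhaseB{}$ and $\PhaseC{}$; it is to use $\PhaseA{}$.
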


\begin{proof}
By Corollary~\ref{lem:all robots are in phase A (resp. B)}, either the robots gather or all enter phase \PhaseA{k} for some $k>0$. Hence, after one more round all robots move to the middle and gather.
\end{proof}

\begin{lemma}\label{lem:correctness, a robot crashes}
If robots crash at the same location, all robots eventually gather at the same point.
\end{lemma}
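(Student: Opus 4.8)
The plan is to show that within finitely many rounds the configuration becomes completely controllable, and then finish with an explicit one‑ or two‑round case analysis. Since a crashed robot never moves, ``the robots gather'' necessarily means ``the robots gather at the crash location'', so it suffices to prove that gathering always occurs. First I would dispose of the easy reductions: by Lemma~\ref{lem:movement are eventually rigid} we may assume movements are rigid, and, as observed earlier, we may assume the crash occurred at the start, at some location $l$. Then I would invoke Corollary~\ref{lem:all robots are in phase A (resp. B)} for phase $\PhaseB{}$ with a large target index: either the robots gather (and we are done), or they reach a configuration $C(t_0)$ consisting of exactly two points $\{l,p\}$ in which every robot is in the same phase $\PhaseB{k'}$. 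In $C(t_0)$ the crashed robot sits at $l$, every correct robot not at $l$ sits at $p$, and possibly some correct robots also sit at $l$.

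Next I would analyse the single round executed from $C(t_0)$. Put $l$ at the origin and $p$ at distance $d$. The $\PhaseB{k'}$ rule sends every correct robot at $p$ to $\frac{8}{9}d$ (if it sees itself on the left) or to $\frac{9}{10}d$ (if it sees itself on the right), every correct robot at $l$ to $\frac{1}{9}d$ or $\frac{1}{10}d$, and leaves the crashed robot at the origin; hence $C(t_0+1)$ is a subset of $\{0,\frac{1}{10}d,\frac{1}{9}d,\frac{8}{9}d,\frac{9}{10}d\}$ that contains $0$. If $C(t_0+1)$ has at least three points, I would verify that $0$ is the unique extremal point whose ratio $e/d$ (distance to its nearest neighbour, divided by the distance between the two extremal points) lies in $\left\{\frac{1}{9},\frac{1}{8},\frac{9}{80},\frac{10}{81},\frac{10}{18},\frac{9}{16},\frac{80}{81}\right\}$: a short computation shows this ratio at $0$ is always one of $\frac{1}{9},\frac{9}{80},\frac{10}{81},\frac{1}{8},\frac{80}{81}$, while at the other extremity it is always one of $\frac{1}{81},\frac{71}{81},\frac{8}{9},\frac{7}{8},\frac{71}{80}$, none of which is in the set. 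Thus $C(t_0+1)\in\PhaseText(l)$, and in the next round all robots move to $l$ and gather.

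The remaining case is that $C(t_0+1)$ has exactly two points, which forces that no correct robot is at $l$ and that all correct robots at $p$ see the line the same way; so $C(t_0+1)$ is a \emph{clean} two‑point configuration, meaning the crashed robot is alone at one point, all correct robots are together at the other, and they agree on the orientation, and it is still in phase $\PhaseB{k'}$. From a clean configuration all correct robots perform the same move and march toward $l$, so the configuration stays clean and the distance is multiplied by $\frac{8}{9}$ or $\frac{9}{10}$ each round; since the distance tends to $0$, after finitely many rounds some robot's level reaches $\PhaseC{k'}_1$ (a single‑level phase). If all robots cross simultaneously, they are all at that single level, then (after $\move{\frac{1}{2}}$) all at the single level of $\PhaseC{k'}_2$, where, sharing an orientation, they either all perform $\move{1}$ and gather at $l$, or all perform $\move{\frac{1}{2}}$ and reach the single level of $\PhaseC{k'}_3$, where they all perform $\move{1}$ and gather at $l$. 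If they do not cross simultaneously, the robots now in $\PhaseC{k'}_1$ perform $\move{\frac{1}{2}}$ while the others keep performing their common $\PhaseB{k'}$ move, producing $\{0,\frac{1}{2}d,\frac{8}{9}d\}$ or $\{0,\frac{1}{2}d,\frac{9}{10}d\}$: here the ratio at $0$ is $\frac{9}{16}$ or $\frac{10}{18}$ (in the set) and the ratio at the other extremity is $\frac{7}{16}$ or $\frac{4}{9}$ (not in the set), so this configuration is again in $\PhaseText(l)$ and gathering at $l$ follows one round later.

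The hardest part is the combination of the second and third steps: one must check that \emph{every} configuration that a single $\PhaseB{k'}$ round can produce in the presence of a crash --- over all the ways the correct robots can split, whether because they disagree on the orientation or because they straddle the $\PhaseB{k'}/\PhaseC{k'}_1$ boundary --- lands in the predictable phase $\PhaseText(l)$ with the crash as the distinguished extremity, and, crucially, that the \emph{other} extremity never also qualifies (which would push the configuration into $\PhaseSEC$ instead of $\PhaseText(l)$ and break the argument). This is precisely why the fractions $\frac{1}{9}$ and $\frac{1}{10}$ in $\PhaseB{}$, together with the $\move{\frac{1}{2}}$ of $\PhaseC{k}_1$ and the seven ratios defining $\PhaseText$, are chosen as they are; the verification is elementary but must be carried out carefully, case by case.
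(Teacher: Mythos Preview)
Your proof is correct and follows essentially the same route as the paper's: invoke Corollary~\ref{lem:all robots are in phase A (resp. B)} to synchronise all robots in some $\PhaseB{k'}$, then split into (i) the round produces $\geq 3$ aligned points and lands in $\PhaseText(l)$, versus (ii) the round leaves a ``clean'' two-point configuration that marches through $\PhaseB{k'}$ into the $\PhaseC{k'}$ levels and gathers there (possibly via one more $\PhaseText(l)$ detour at the $\PhaseB{k'}/\PhaseC{k'}_1$ boundary). Your write-up is in fact a bit more careful than the paper's in one respect: you explicitly list the ratios that can appear at the \emph{non-crashed} extremity and check they are disjoint from the $\PhaseText$ set, which the paper asserts only implicitly.
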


\begin{proof}
Assume some robots crash at the same location. By Corollary~\ref{lem:all robots are in phase A (resp. B)}, either the robots gather, or all enter phase \PhaseB{k} for some $k>0$. Let $d$ denote the distance between the two points.

\textbf{If some correct robots are located at the crashed location}, then after one round, at least three points are formed. In the obtained configuration, all the robots remain on the same line, and the new distance $d'$ between the two farthest robots depends on which move the robots performed (move to $\frac{1}{10}$, or move to $\frac{1}{9}$). In case 1, if at least one robot at the opposite of the crashed location performed a $\frac{1}{10}$ move, then $d'=\frac{9}{10}d$, otherwise, in case 2, $d'=\frac{8}{9}d$.
Also, all the correct robots that were on the crashed location are now at distance $\frac{1}{10}d$ or $\frac{1}{9}d$ from the crashed location (again depending on what move they performed). So depending on the new value of $d'$, the crashed location is at the extremity of the line, and its distance from the closest correct robot is either in $\left\{\frac{1}{9}d', \frac{10}{81}d'\right\}$ in case 1 (\emph{e.g.}, $\frac{1}{9}d'$ is obtained by robots that performed a $\frac{1}{10}$ move, \ie, $\frac{1}{10}d=\frac{1}{10}\frac{10}{9}d'=\frac{1}{9}d'$), or in $\left\{\frac{9}{80}d', \frac{1}{8}d'\right\}$ in case 2.

Note that the robots at the other extremity of the line have a distance from the closest robot either $\frac{1}{81}d'$ (if there were at least two robots that moved differently) or is greater than $\frac{1}{2}d'$ (otherwise).
All the possible cases are represented in Figure~\ref{fig:move in phase B}.

\begin{figure}
    \centering
    
    \begin{tikzpicture}[scale=5]
\draw[] (-0.2,0) -- (1.2,0);
\node[robot] (a) at (0,0) {};
\node[robot] (b) at (1,0) {};
\node[dest] (a1) at (1/8,0) {};
\node[dest] (a2) at (1/11,0) {};
\node[dest] (b1) at (1-1/8,0) {};
\node[dest] (b2) at (1-1/11,0) {};

\path (a) edge[bend left=60,->,>=stealth'] node [left] {} (a1);
\path (a) edge[bend left=40,->,>=stealth'] node [left] {} (a2);
\path (b) edge[bend right=60,->,>=stealth'] node [left] {} (b1);
\path (b) edge[bend right=40,->,>=stealth'] node [left] {} (b2);

\draw[] (1-0.02,0.02) -- (1+0.02,0.02+0.04);
\draw[] (1-0.02,0.02+0.04) -- (1+0.02,0.02);
\node at (1.3,0.06) {crashed location};

\draw[dotted] (1/11,0) -- (1/11,-0.1);
\draw[dotted] (1/8,0) -- (1/8,-0.1);
\draw[<->] (1/8,-0.1) -- (1/11,-0.1);
\node at (0.1, -0.2) {$\frac{1}{90}d = \frac{1}{81}d'$};

\draw[dotted] (1,0) -- (1,-0.2);
\draw[dotted] (1-1/11,0) -- (1-1/11,-0.1);
\draw[<->] (1,-0.1) -- (1-1/11,-0.1);
\draw[dotted] (1-1/8,0) -- (1-1/8,-0.25);
\draw[<->] (1,-0.25) -- (1-1/8,-0.25);

\node at (1.3, -0.1) {$\frac{1}{10}d\in \{ \frac{1}{9}d', \frac{9}{80}d' \}$};
\node at (1.3, -0.25) {$\frac{1}{9}d\in \{ \frac{10}{81}d', \frac{1}{8}d' \}$};

\end{tikzpicture}
    \caption{The possible destinations of robots after executing a move in phase B, assuming a single crashed location (on the right in this example). All the white nodes represent the possible destinations of the correct robots. We can see that regardless of where the robots end up, all the robots can detect where is the crashed location, using the ratio of the distance from an extremity to the closest robot.}
    \label{fig:move in phase B}
\end{figure}
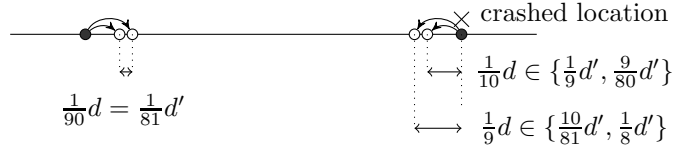

In any case, all the robots can detect where the crashed location is located. The crashed location is at the extremity of the line such that the closest robot is at distance in $\{\frac{1}{9}d', \frac{9}{80}d', \frac{10}{81}d', \frac{1}{8}d'\}$, where $d'$ is the length of the line (recall that the ratio does not depend on the value of $d'$). Thus, correct robots gather at the crashed location after one more round by performing move $\PhaseText$.

\textbf{If no correct robot is located at the crashed location, and some robots have different orientations}, the same thing happens. The obtained configuration contains three points, two of them are at distance $\frac{1}{81}d'$, and the crashed location is at distance $\frac{80}{81}d'$ from the closest correct robot, so that correct robots gather at the crashed location after one more round by performing move $\PhaseText$.

\textbf{If no robot is located at the crashed location, and all the robots have the same orientation}, then the correct robots come closer to the crashed location every round until one robot enters phase $\PhaseC{k}_1$. If some correct robots are still in phase $\PhaseB{k}$, then, similarly to the previous cases, the obtained configuration contains 3 or 4 points, and all correct robots can detect where the crashed location is, as it is the extremity with distance from the closest robot in $\{\frac{10}{18}d', \frac{9}{16}d'\}$. Thus, correct robots gather at the crashed location after one more round by performing move $\PhaseText$.

If all correct robots enter phase $\PhaseC{k}_1$ at the same time, then after one more round they all enter phase $\PhaseC{k}_2$. If they are on the left of the crashed location, all correct robots gather at the crashed location, otherwise, they all move to the middle, enter phase $\PhaseC{k}_3$, and then move and gather at the crashed location.
\end{proof}

From Lemma~\ref{lem:correctness, no robots crash} and Lemma~\ref{lem:correctness, a robot crashes}, the main theorem follows.

\begin{theorem}
Algorithm~\ref{algo:SUIG} solves the SUIG problem in FSYNC, for any initial configuration, with $n\geq 1$ disoriented robots, without multiplicity detection, and with non-rigid movements.
\end{theorem}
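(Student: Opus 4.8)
The plan is to assemble the final theorem from the pieces already established, after first discharging two preliminary reductions that are implicit in the setup. First I would invoke the 1D-to-2D reduction: Theorem~\ref{thm:1D to 2D} is stated for rendezvous, but the same construction (mapping the local view onto the line through the occupied points via the invariant orientation vector $\vfunc$, running the 1D algorithm, and pulling the destination back) applies to gathering when all robots are collinear. Since Algorithm~\ref{algo:SUIG} is designed so that every move keeps collinear robots collinear (all target destinations — the middle, an extremity, the SEC center of a collinear configuration, or a $\move{e}$ point — lie on the same line), once the configuration becomes collinear it stays collinear, and the 1D analysis governs the rest of the execution. So the real content is: (a) show the configuration eventually becomes collinear (or robots gather first), and (b) in the collinear regime, apply the correctness lemmas.

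For step (a), I would observe that a non-collinear configuration is never in $\Conf_2$ and never in $\PhaseText$ (which requires alignment), hence it is in $\PhaseSEC$, so every correct robot moves to the center of the SEC. In FSYNC with at most one crash location, after one round either all robots gather (no crash, or the crash is at the SEC center) or the configuration collapses to exactly two points — the crashed location and the SEC center — which is collinear. Thus after at most one round from any non-collinear configuration we are either gathered or in a two-point (hence collinear) configuration. Combined with the collinearity-preservation remark above, this means: for the purpose of the theorem we may assume the robots are collinear from the start (absorbing one extra round), and then by Lemma~\ref{lem:movement are eventually rigid} we may further assume rigid movements (absorbing finitely many more rounds, during which $D(t)$ shrinks below $\delta$).

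For step (b), the theorem then splits on whether any robot crashes. If no robot crashes, Lemma~\ref{lem:correctness, no robots crash} gives gathering directly (via Corollary~\ref{lem:all robots are in phase A (resp. B)}: either they gather, or all robots simultaneously reach a common phase $\PhaseA{k}$, whence one more round of $\move{1/2}$ gathers them). If some robots crash at a common location, Lemma~\ref{lem:correctness, a robot crashes} gives gathering at the crash location, using the same corollary to bring all correct robots into a common phase $\PhaseB{k}$ and then exploiting the carefully chosen distance fractions so that the crash location becomes identifiable as a $\PhaseText$ extremity within a constant number of rounds. Since we have reduced the initial (possibly non-collinear, possibly non-rigid, arbitrary, possibly bivalent or with multiplicities) configuration to the collinear rigid case, and the two lemmas cover both the crash-free and single-crash-location cases, the theorem follows. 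Finally I would remark that the stated generality — any initial configuration, $n \geq 1$, disoriented robots, no multiplicity detection, non-rigid movements — is respected throughout: the algorithm's branches depend only on the number and alignment of occupied points and on the observing robot's own unit distance and orientation, never on multiplicities or a shared frame.

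The main obstacle I anticipate is making the collinearity reduction fully rigorous in the presence of a crash: one must check that when a $\PhaseSEC$ or $\PhaseText$ step is taken and a crashed robot sits off the eventual line, the resulting two-point configuration is genuinely in $\Conf_2$ and that the level bookkeeping of Lemma~\ref{lem:increase of level is at most 7} still applies from that point — but this is exactly what the case analysis inside Lemma~\ref{lem:movement are eventually rigid} (Cases A–E) and Lemma~\ref{lem:increase of level is at most 7} already handle, so the proof of the theorem itself is a short assembly. The only genuinely delicate point left to the theorem's proof is ensuring the $\PhaseText$ fractions are consistent across robots with different unit distances, which is why the ratios $e/d$ (being scale-invariant) were chosen as they are; I would cite the footnote and Figure~\ref{fig:move in phase B} rather than re-derive this.
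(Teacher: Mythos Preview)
Your proposal reaches the right conclusion by invoking Lemma~\ref{lem:correctness, no robots crash} and Lemma~\ref{lem:correctness, a robot crashes}, which is exactly what the paper does --- but the paper's proof is a single sentence citing those two lemmas and nothing else. All of your step~(a) and the 1D-to-2D discussion are superfluous: Algorithm~\ref{algo:SUIG} is stated directly in the plane (the $\PhaseSEC$ branch handles arbitrary, non-collinear configurations), and the supporting chain Lemma~\ref{lem:movement are eventually rigid} $\to$ Lemma~\ref{lem:increase of level is at most 7} $\to$ Corollary~\ref{lem:all robots are in phase A (resp. B)} already treats every configuration type (two points, aligned with an identifiable extremity, or general). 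Once those are in place, Lemmas~\ref{lem:correctness, no robots crash} and~\ref{lem:correctness, a robot crashes} exhaust the crash/no-crash dichotomy, and the theorem is immediate. Theorem~\ref{thm:1D to 2D} is used only for the SUIR section and plays no role here.

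There is also a small gap in your step~(a) as written: you assert that from a non-collinear configuration ``after one round \ldots\ the configuration collapses to exactly two points --- the crashed location and the SEC center,'' and only \emph{afterwards} invoke Lemma~\ref{lem:movement are eventually rigid}. But that one-round collapse presupposes rigid movements; with non-rigid movements the correct robots may be stopped at distinct points along their paths to the SEC center, yielding more than two occupied locations. The paper avoids this by establishing eventual rigidity \emph{first} (Lemma~\ref{lem:movement are eventually rigid}, whose Case~B covers precisely the non-collinear $\PhaseSEC$ situation under non-rigid moves), and only then running the level-based argument. Reordering your reductions --- rigidity before any per-round collapse claim --- fixes this, but once you do so the separate collinearity reduction becomes redundant anyway.
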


\newcommand{\deltalevel}{\Delta_{\mathit{level}}}
The round complexity of our algorithm can only be bounded when there is no crash and when the movements are rigid. It depends on $l_{\min}$, the level of the lowest robot, and
on $\deltalevel$, the difference between the level of the largest robot and the level of the lowest level.

\begin{theorem}
If movements are rigid and there is no crash,
the round complexity of Algorithm~\ref{algo:SUIG} is \[
O\left(
\deltalevel^2
+
\sqrt{l_{\min}}
\right).
\]
\end{theorem}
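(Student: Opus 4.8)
The plan is to bound separately the time spent in each "regime" of the execution and show their sum is $O(\deltalevel^2 + \sqrt{l_{\min}})$. First I would observe that, by Lemma~\ref{lem:increase of level is at most 7} (and the fact that when $C(t)$ is not in $\Conf_2$ all robots move to a common destination and gather in one round when there is no crash), after at most one round the configuration consists of two points, and thereafter it stays in $\Conf_2$ until the final gathering: in $\Conf_2$ the only moves are $\move{1/10},\move{1/9},\move{1/2},\move{1}$, each of which keeps robots aligned on two points or collapses them. So the entire execution, after the first round, proceeds through the phases $\PhaseA{k},\PhaseB{k},\PhaseC{k}_1,\PhaseC{k}_2,\PhaseC{k}_3$. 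The key quantitative fact is that in phases $\PhaseA{k}$, $\PhaseC{k}_1$, $\PhaseC{k}_2$, $\PhaseC{k}_3$ the distance between the two points is halved every round (move to the middle / move to the other robot, in a suitable order — this is exactly the analysis underlying Lemma~\ref{lem:correctness, no robots crash}), so the lowest robot's level increases by exactly one each round; in phase $\PhaseB{k}$ the distance changes only by a bounded multiplicative factor (between $1/9$ and $1$), so the level changes by a bounded additive amount, but crucially robots do not stall there indefinitely.

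Next I would make the level-accounting precise. Write $L_k = S_k = k(k+2)$ for the starting level of "block" $k$, so block $k$ spans levels $[L_k, L_{k+1}-1]$ and has width $2k+3$. Because of the hierarchical widths, Corollary~\ref{lem:all robots are in phase A (resp. B)} guarantees that, once $k$ is large enough (namely $k > \deltalevel + 7$), all robots are simultaneously in the same $\PhaseA{k}$ (or $\PhaseB{k}$), after which one more round of $\move{1/2}$ gathers everyone. So the algorithm terminates within the first block index $k^\star$ satisfying $k^\star > \deltalevel + 7$ and $S_{k^\star} \ge l_{\min}$, i.e. $k^\star = O(\deltalevel) + O(\sqrt{l_{\min}})$, since $S_k \ge l$ requires $k = O(\sqrt{l})$. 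The round count is then the sum, over blocks $k = 0,\dots,k^\star$, of the number of rounds spent in block $k$. In the "halving" subphases $\PhaseA{k}\setminus(\text{all in same }\PhaseA{k})$, $\PhaseC{k}_1,\PhaseC{k}_2,\PhaseC{k}_3$, the lowest robot gains one level per round, so the number of rounds there is at most the number of levels in those subphases the robots actually traverse, which is $O(\text{width of block } k) = O(k)$ — but one must be careful: not every level of the block is necessarily traversed by the lowest robot's dynamics, yet the level is monotone nondecreasing and increases by at least one per round outside $\PhaseB{}$, so rounds outside $\PhaseB{}$ in block $k$ is $\le 2k+3 = O(k)$. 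Summing $O(k)$ over $k \le k^\star$ gives $O((k^\star)^2) = O(\deltalevel^2 + l_{\min})$ — which is too weak; so the argument must be sharper for the $\sqrt{l_{\min}}$ term.

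The sharpening — and the main obstacle — is to see that the level increases by (at least) one \emph{every round}, across the whole execution from the moment we are in $\Conf_2$, including the transitions through $\PhaseB{}$. Since the level is bounded above by $l_{\min} + O(\deltalevel)$ at the moment of final gathering (all robots in a common $\PhaseA{k^\star}$ have levels within a $\deltalevel$-window around $S_{k^\star} = O(l_{\min})$), a monotone per-round increase bounds the total number of rounds spent in $\Conf_2$ by the total level gain, which is $O(l_{\min})$ — again too weak by itself, but the point is that the $\deltalevel^2$ and $\sqrt{l_{\min}}$ contributions come from different sources: the $\sqrt{l_{\min}}$ is really the number of rounds spent in phase $\PhaseB{}$ only (where the level may stall), because there are at most $k^\star = O(\sqrt{l_{\min}}) + O(\deltalevel)$ distinct blocks and hence at most that many passages through a $\PhaseB{}$-subphase, and I must show each passage through $\PhaseB{k}$ lasts $O(k)$ rounds — giving $\sum_{k\le k^\star} O(k) = O((k^\star)^2)$, still $\deltalevel^2 + l_{\min}$. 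To actually land on $\deltalevel^2 + \sqrt{l_{\min}}$ the correct bookkeeping must be: the halving subphases cost, \emph{in total over the whole run}, only $O(l_{\min} + \deltalevel)$? — no; rather, I expect the intended argument is that the level strictly increases every round once robots are all within a bounded window, so the time spent with all robots within distance-ratio $2^{\deltalevel}$ is bounded by the number of block-boundaries crossed, $O(\sqrt{l_{\min}})$, plus a $O(\deltalevel^2)$ overhead to first synchronize all robots into a common block (each block of width $O(\deltalevel)$ near index $\approx\deltalevel$ costs $O(\deltalevel)$ rounds, and one may need $O(\deltalevel)$ such blocks before the width exceeds $\deltalevel + 7$, giving $O(\deltalevel^2)$). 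The hard part will be formalizing this split cleanly — i.e. proving that after the first $O(\deltalevel^2)$ rounds the robots are "phase-synchronized" and from then on the level advances essentially one-per-round so the remaining time is $O(\sqrt{l_{\min}})$ — and handling the boundary behaviour of $\PhaseB{}$ where levels need not strictly increase. I would therefore structure the proof as: (1) reduce to $\Conf_2$; (2) prove level monotonicity and the per-round increase outside $\PhaseB{}$; (3) bound the number of blocks by $k^\star = O(\deltalevel + \sqrt{l_{\min}})$; (4) bound the pre-synchronization cost by $O(\deltalevel^2)$; (5) bound the post-synchronization cost by $O(\sqrt{l_{\min}})$; and conclude by summation.
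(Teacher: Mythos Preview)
Your final outline (steps (1)--(5)) lands close to the paper's argument, but you reach it only after several detours, and the justification you offer for step~(5) is not the right one. The paper's proof is far more direct and never sums over blocks; it simply splits on the value of $l_{\min}$ relative to $S_{\deltalevel+7}$. In Case~(i), $l_{\min} < S_{\deltalevel+7}$: the lowest robot must climb to level $S_{\deltalevel+7}=(\deltalevel+7)(\deltalevel+9)=O(\deltalevel^2)$, which costs $O(\deltalevel^2)$ rounds, after which all robots lie in $\PhaseA{\deltalevel+7}$ and gather in one step via $\move{\tfrac{1}{2}}$. In Case~(ii), $l_{\min}\in[S_k,S_{k+1})$ with $k\ge\deltalevel+7$: the robots need only reach level $S_{k+1}$, which is at most $S_{k+1}-S_k=2k+3=O(k)$ levels away, hence $O(k)=O(\sqrt{l_{\min}})$ rounds since $k(k+2)\le l_{\min}$.

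The gap in your write-up is exactly here. You assert ``post-synchronization cost $O(\sqrt{l_{\min}})$'' on the grounds that ``the level advances essentially one-per-round'', but a per-round level increase only bounds the time by the \emph{number of levels left to climb}, and you never say why that number is $O(\sqrt{l_{\min}})$ rather than $O(l_{\min})$. The missing observation is that the execution starts at level $l_{\min}$ and must only reach the \emph{next} block boundary $S_{k+1}$, which is $O(k)=O(\sqrt{l_{\min}})$ levels away---you do not traverse blocks $0,\ldots,k^\star$ as your earlier accounting (``sum over blocks $k=0,\dots,k^\star$'') suggested; that over-counting is precisely what produced your spurious $O(l_{\min})$ term. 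Once you state the single-block observation, the worries about $\PhaseB{}$ stalls and per-block summation evaporate: the whole proof is the two-line case split above. (A minor side remark: your claim that the configuration ``stays in $\Conf_2$ until the final gathering'' is not literally true when robots in different phases make different moves, but with no crash and rigid movements any resulting configuration with three or more points gathers on the very next round, so this does not affect the bound.)
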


\begin{proof}

If there is more than two occupied points, the gathering is achieved in one round. Otherwise, it depends on the level of the robots.

Case $(i)$: the level of the lowest robot is smaller than $S_{\Delta_{\mathit{level}}+7}$. We know that the number of rounds for the lowest robot to reach level $S_{\Delta_{\mathit{level}}+7}$, \ie, to reach phase $\PhaseA{\Delta_{\mathit{level}}+7}$, is in $O\left(\Delta_{\mathit{level}}^2\right)$ rounds. When this level is reached, then, robots gather in one round (they all execute $\move{\frac{1}{2}}$). 

Case $(ii)$: the level $l_{\min}$ of the lowest robot is between $S_{k}$ and $S_{k+1}$, with $k\geq \Delta_{\mathit{level}}+7$. 
In the worst case, the gathering is achieved one round after the lowest robot reaches level $S_{k+1}$ (\ie when all robots execute $\PhaseA{k + 1}$). So the round complexity in this case is $O(k) \subset O\left(\sqrt{l_{\min}}\right)$ (because $k$ is defined such that $S_k = k(k+2) \leq l_{\min}$). \end{proof}

In fact, when robots can crash, the adversary can crash a robot arbitrarily close to the other robots, so that the level of the other robots is greater that $S_k$ with $k$ arbitrarily large. To reach phase $\PhaseB{k+1}$, the number of round can be in $\Omega(k)$ with $k$ arbitrarily large. Similarly, if there is no crash but movement are not rigid, then the adversary can stop one robot arbitrarily close to the other robots, creating the same situation. In both cases, the previous theorem can be applied from this point (one can see that if there is a crash, but we start with two occupied node, the previous theorem still hold).

\section{Concluding Remarks}
\label{sec:conclusion}

We presented the first stand-up indulgent gathering (also known as strong fault-tolerant gathering in the literature) solution. Furthermore, our solution is self-stabilizing (the initial configuration may include multiplicity points, and even be bivalent), does not rely on extra assumptions such as multiplicity detection capacity, a common direction, orientation, or chirality, etc, and its running time is proportional to the maximum initial distance between robots. 

The very weak capacities of the robots we considered make the problem unsolvable in more relaxed execution models, such as SSYNC. However, it may be possible to solve SUIG in SSYNC with a number of additional assumptions: non-bivalent initial configurations (otherwise, the impossibility from Corollary~\ref{cor:impossibility of SUIG in SSYNC} applies), multiplicity detection capacity (otherwise, the impossibility for classical gathering applies~\cite{prencipe07tcs}), and persistent coordinate system (otherwise, the impossibility by Defago et al.~\cite{defago20dc} about SUIG applies). This open question is left for future research.

\bibliographystyle{plain}
\bibliography{biblio}

\end{document}